\documentclass[runningheads,11pt]{llncs}
\usepackage{makeidx,epsf,psfrag,amsmath,epsfig,graphics}
\usepackage{algorithm}
\usepackage{algorithmic}

\usepackage[top=1in, bottom=1in, left=1in, right=1in]{geometry}

\newenvironment{keywords}{
       \list{}{\advance\topsep by0.35cm\relax\small
       \leftmargin=1cm
       \labelwidth=0.35cm
       \listparindent=0.35cm
       \itemindent\listparindent
       \rightmargin\leftmargin}\item[\hskip\labelsep
                                     \bfseries Key words:]}
     {\endlist}

\begin{document}

\setcounter{page}{1}
\pagestyle{empty}
\pagestyle{headings}

\title{A $\frac{4}{3}$-approximation algorithm for finding a spanning tree to maximize its internal vertices}

\titlerunning{A $\frac{4}{3}$-approximation Algorithm for MIST}

\author{Xingfu Li  \and Daming Zhu }

\authorrunning{Li and Zhu}

\institute{
{School of Computer Science and Technology, Shandong University,
Jinan, Shandong 250100, China. \\ Email: {\tt lxf@mail.sdu.edu.cn, dmzhu@sdu.edu.cn}}
}

\date{}
\maketitle

\begin{abstract}
    This paper focuses on  finding a spanning tree of a graph to maximize the number of its internal vertices. We present an approximation algorithm for this problem which can achieve  a performance ratio $\frac{4}{3}$ on undirected simple graphs. This improves upon the best known  approximation algorithm with  performance ratio  $\frac{5}{3}$ before. Our algorithm benefits from a new observation  for bounding the number of internal vertices of a spanning tree, which reveals that a spanning tree of an undirected simple graph has less internal vertices than the edges a maximum path-cycle cover of that graph has. We can also give an example to show that the performance ratio $\frac{4}{3}$ is actually tight for this algorithm. To decide how difficult it is for this problem to be  approximated, we show that finding a spanning tree of an undirected simple graph to maximize its internal vertices  is Max-SNP-Hard.
\end{abstract}

\begin{keywords}
    Algorithm,  Complexity, Performance Ratio, Spanning Tree, Internal Vertex, Max-SNP-Hard.
\end{keywords}

\section{Introduction}

The \emph{Maximum Internal Spanning Tree} problem, MIST for short, is  motivated by the designment of cost-efficient communication networks \cite{Salamon2008}.  It asks  to find a spanning tree of a graph such that the number of its internal vertices is maximized. MIST  is NP-hard,  because a Hamilton path (if present) of a graph is just a maximum internal spanning tree of that graph, and finding a Hamilton path in a graph is   NP-Hard classically \cite{Garey1979}.



MIST is known to admit approximation algorithms with constant performance ratio.  Prieto et al. \cite{Prieto2003} first presented a 2-approximation algorithm for MIST on undirected simple graphs  by a local search technique in 2003. Later, by a slight modification, Salamon et al. \cite{Salamon2008} improved Prieto's  algorithm to running in linear-time.  Moreover, they  developed a $\frac{3}{2}$-approximation algorithm for MIST on claw-free graphs and a $\frac{6}{5}$-approximation algorithm on cubic graphs  \cite{Salamon2008}. Salamon even showed that his  algorithm in \cite{Salamon2008} can achieve  a performance ratio $\frac{r+1}{3}$  on  $r$-regular graphs  \cite{Aalamon2009PhD}. Later, Salamon \cite{Salamon2009} devised a local optimization algorithm which can approximate MIST on graphs without leaves to $\frac{7}{4}$ in $O(n^4)$ time. Through a different analysis, Knauer et al. \cite{Knauer2009} showed that  Salamon's algorithm in \cite{Salamon2009} can actually achieve  a performance ratio $\frac{5}{3}$ on undirected simple graphs in $O(n^3)$ time.  Knauer's algorithm is a simplification of the Salamon's, because  a substantially smaller neighborhood structure in the local optimization is sufficient to guarantee the approximation ratio.

Salamon et al. \cite{Salamon2009} also devoted attention to the so called weighted MIST, which asks to find a spanning tree of a  vertex weighted graph, to maximize the total weights of its internal vertices. They designed a $(2\Delta -3)$-approximation algorithm for weighted MIST on graphs without leaves with time complexity $O(n^4)$, where $\Delta$ is the maximum degree of the graph. They also proposed a  2-approximation algorithm for  weighted MIST on claw-free graphs without leaves with time complexity $O(n^4)$.  Later, Knauer et al. \cite{Knauer2009} proposed a $(3+\epsilon)$-approximation algorithm for  weighted MIST on generic undirected simple graphs.

The fix parameterized  algorithms of MIST have also been extensively studied  in the recent years.  Prieto and Sloper \cite{Prieto2003} designed the first FPT-algorithm with running  time  $O^{*}(2^{4klogk})$ in 2003. Coben et al. \cite{Coben2010} improved this algorithm to achieve a  time complexity  $O^{*}(49.4^{k})$. Then an FPT-algorithm for MIST with time complexity $O^{*}(8^{k})$\cite{Fomin2013}, and an FPT-algorithm  with time complexity $O^{*}(16^{k+o(k)})$\cite{Fomin2010} on directed graphs were proposed by Fomin et al. On  directed graphs, a randomized FPT-algorithm proposed by M. Zehavi is by now  the fastest one, which  runs in    $O^{*}(2^{(2-\frac{\Delta+1}{\Delta(\Delta-1)})k})$ time \cite{Zehavi2013}, where  $\Delta$ is the vertex degree bound of a graph.   On cubic graphs in which each vertex has degree three, Binkele-Raible et al. \cite{Fernau2013} proposed an FPT-algorithm which runs in  $O^{*}(2.1364^{k})$ time.

For the kernalization of MIST,  Prieto and  Sloper first presented an $O(k^{3})$-vertex kernel \cite{Prieto2003,Prieto2005}. Later, they improved it to  $O(k^{2})$ \cite{Prieto2005-2}.  Recently, Fomin et al. \cite{Fomin2013} gave a $3k$-vertex kernel for this problem, which is the best by now.

As for the exact exponential algorithms, Binkele-Raible et al. \cite{Fernau2013} proposed a dynamic programming algorithm for MIST with  time complexity $O^{*}(2^{n})$.  Their algorithm runs in $O^{*}((2-\epsilon)^{n})$ time on degree bounded graphs. Especially, they proposed a branching algorithm for MIST on graphs with vertex degree at most 3, which runs in $O(1.8612^{n})$ time and polynomial space.

The best performance ratio for approximating MIST has been $\frac{5}{3}$ by now \cite{Salamon2009,Knauer2009}.  Although MIST is NP-Hard, to what extent MIST rejects to be approximated has been keeping undetermined for many years.

In this paper, we devote to approximate  MIST on generic undirected simple graphs. We propose an algorithm which can approximate MIST to a performance ratio $\frac{4}{3}$. This improves upon the best known existing performance ratio  for approximating MIST \cite{Salamon2009,Knauer2009}. Primarily, our improvement is based on a new observation which reveals  that in number, those internal vertices of a spanning tree of a graph can be bounded by the edges of a maximum path-cycle cover of that graph. Thus a spanning tree can  be constructed from a maximum path-cycle cover. To arrive at a spanning tree with enough internal vertices, a graph has to be reduced by deleting some of its edges and vertices in favor of getting a maximum path-cycle cover with special natures as we cry for; then a maximum path-cycle cover has to be so reconstructed that each path component of length 1, 2 or 3 can have one of its endpoints adjacent to a vertex of a path component of length at least 4. This makes it possible to use a combinatorial way to construct a spanning tree which has three fourth times as many internal vertices as those a maximum internal spanning tree has.

For answering how difficult it is to approximate  MIST, we show that, if P $\neq$ NP, MIST rejects any polynomial time  algorithm to approximate MIST to $1+\epsilon$  for some  $\epsilon>0$. This proof is done by two reductions which  are  from (1,2)-TSP \cite{Papa1993} to the Maximum Path Cover problem \cite{Papa1993},  then from the Maximum Path Cover problem to MIST.

This paper is organized as follows. Section \ref{prliminaries} presents the concepts and notations  related to path cover, path-cycle cover, maximum internal spanning tree on graphs. Section \ref{internal-vertex-upper-bound}  presents how to bound the number of internal vertices of a spanning tree  by the number of edges of a mximum path-cycle cover. Section \ref{pre-process} presents how to reduce a graph into a special one conditioned by keeping the number of internal vertices of a maximum internal spanning tree unchanged. This just implies  that MIST on any undirected simple graph can be approximated to a performance ratio the same as that  MIST on a reduced one can be approximated to.  In section \ref{algorithm}, we devise a  $\frac{4}{3}$-approximation algorithm for MIST  on  reduced graphs. In section \ref{hardness}, we show that MIST is Max-SNP-Hard.
Section \ref{conclusion} is concluded by looking forward to the future work for MIST.

\section{Preliminaries}
\label{prliminaries}

In this paper, a graph is always undirected and simple. A path or cycle we mentioned is always simple.
Let $G$ =$(V,E)$ be an undirected simple graph. Moreover, $V(G)$ and $E(G)$ also stand for the vertex and the edge set of $G$ respectively, if there is no special emphasis.   A connected component of $G$ is a \emph{path $(resp.$ $cycle)$ component}, if it is also a path (resp. cycle) of $G$.
For $V_{1}\subseteq{V}$, a
subgraph of $G$ is  \emph{induced by} $V_1$ if it has all the vertices in $V_1$, and  the  edges of $G$ each of which has both its  ends   in $V_{1}$. The subgraph of $G$ induced by $V_1$ is abbreviated as $G[V_1]$.
A subgraph of $G$ is  a \emph{spanning subgraph} of
$G$ if it has the vertex set $V$ and  an edge set
$E_{1}$ $\subseteq$ ${E}$. The spanning subgraph of $G$ with the edge set $E_1$ $\subseteq$ $E$ is abbreviated as $G[E_1]$.


A  spanning subgraph of $G$ is
 a \emph{{path-cycle cover}} of $G$ if every vertex
in it is incident with at most  $2$ edges. A path-cycle cover of $G$
is \emph{maximum} if
its edges are maximized in number over all  path
covers of $G$.

 A spanning subgraph of $G$ is
 a \emph{{path cover}} if  every
connected component of it is a path component.
A path cover of $G$ is \emph{maximum} if its edges
 are maximized in number over all  path covers of $G$.
The Maximum Path Cover problem, MPC for short, is given by an undirected simple graph, and asks to find a maximum path cover of that graph.

Although a path-cycle cover of a graph with $n$ vertices and $m$ edges can be found in $O(nm^{1.5}$ log$n)$ time \cite{Edmonds1970-1,Edmonds1970}, finding a maximum path cover of a graph is NP-Hard \cite{Papa1993}. Since a maximum path cover of a graph is also a path-cycle cover of that graph,  the size of a maximum path cover can be bounded by,

 \begin{lemma}
 \label{path-cover-bound}
 In number, a maximum path cover of a graph has no more edges than those a maximum path-cycle cover of that graph has.
 \end{lemma}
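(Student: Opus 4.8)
The plan is to prove the lemma by a simple set-containment argument, exploiting the fact that the two notions differ only in whether cycle components are permitted. First I would observe that every path cover of $G$ is, by definition, a spanning subgraph in which each connected component is a path; and in any path every vertex is incident with at most two edges (an endpoint with one, an interior vertex with exactly two, and an isolated vertex with none). This is precisely the defining condition of a path-cycle cover. Hence any path cover of $G$ automatically qualifies as a path-cycle cover of $G$; a path cover is nothing but a path-cycle cover that happens to contain no cycle component.

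Consequently, letting $\mathcal{P}$ denote the family of all path covers of $G$ and $\mathcal{PC}$ the family of all path-cycle covers of $G$, the preceding observation yields the inclusion $\mathcal{P}\subseteq\mathcal{PC}$. The size of a maximum path cover is $\max_{P\in\mathcal{P}}|E(P)|$ and the size of a maximum path-cycle cover is $\max_{Q\in\mathcal{PC}}|E(Q)|$. Since the maximum of the same quantity taken over a subfamily can never exceed the maximum taken over the whole family, I immediately obtain $\max_{P\in\mathcal{P}}|E(P)|\le\max_{Q\in\mathcal{PC}}|E(Q)|$, which is exactly the assertion of the lemma.

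There is essentially no hard step here: the result follows directly from the definitions, and the only point requiring verification is the degree condition, namely that a disjoint union of paths does meet the ``at most two incident edges'' requirement imposed on a path-cycle cover. For completeness I would note that the inequality can be strict, so the bound genuinely separates the two quantities rather than hiding an equality: a graph consisting of a single $n$-cycle admits a maximum path-cycle cover using all $n$ edges, whereas any path cover must delete at least one edge to avoid closing the cycle and therefore uses at most $n-1$ edges.
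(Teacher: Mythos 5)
Your proof is correct and follows essentially the same route as the paper, which justifies the lemma by the one-line observation that every path cover is in particular a path-cycle cover, so the maximum over the smaller family cannot exceed the maximum over the larger one. Your additional remarks (verifying the degree condition explicitly and noting that the inequality can be strict for a single cycle) are accurate but not needed beyond what the paper states.
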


A vertex of a graph is  a \emph{leaf} if its degree  is 1, and  \emph{internal} if its degree is more than 1.
A \emph{maximum internal spanning tree} of $G$ is a spanning tree whose internal vertices are maximized in number over all  spanning trees of $G$.
The \emph{Maximum Internal Spanning Tree} problem, MIST namely, is given by an undirected simple graph, and asks to find a maximum internal spanning tree of that graph.

\section{A bound for the number of internal vertices of a spanning tree}
\label{internal-vertex-upper-bound}
Let $G$ =$(V,E)$ be an undirected connected simple graph.
In this section, we show that  a spanning tree of $G$ has less internal vertices than the edges a maximum path-cycle cover  of $G$ has.

\begin{lemma}
\label{tree-path-number}
If a tree has more than one vertex,  then in number, it has a path cover  which has less path components than those  leaves it has.
\end{lemma}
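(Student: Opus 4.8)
The plan is to recast the statement in terms of edges. A path cover of a tree $T$ is exactly a spanning subgraph in which every vertex has degree at most $2$: any subgraph of a tree is a forest, so bounding the degrees by $2$ forces every component to be a path. Consequently a path cover with $p$ components has precisely $n-p$ edges, where $n=|V(T)|$. Writing $\ell$ for the number of leaves of $T$, the assertion $p<\ell$ is therefore equivalent to exhibiting one path cover with at most $\ell-1$ components, and I would prove this sharper form by induction on $\ell$.

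For the base case, recall that a tree with more than one vertex has at least two leaves, and a tree with exactly $\ell=2$ leaves is a path; the path is itself a path cover with a single component, and $1<2$. For the inductive step, suppose $\ell\ge 3$ and that the statement holds for every tree with fewer leaves. Since a tree all of whose vertices have degree at most $2$ is a path (which has only two leaves), the hypothesis $\ell\ge 3$ guarantees a branching vertex, that is, a vertex of degree at least $3$. Starting from any leaf $u$ and walking along the unique path leaving it, I would stop at the first vertex $b$ of degree at least $3$; the vertices strictly between $u$ and $b$ all have degree $2$, so the portion $P_0$ running from $u$ up to but excluding $b$ is a pendant path.

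Deleting $P_0$ from $T$ yields a graph $T'$ that is again a tree. The degree of $b$ drops by one but remains at least $2$, so $b$ does not turn into a leaf and no new leaf is created, while the single leaf $u$ is removed. Hence $T'$ has exactly $\ell-1$ leaves and, because $b$ keeps degree at least $2$, at least three vertices, so the inductive hypothesis applies and furnishes a path cover of $T'$ with at most $\ell-2$ components. Re-attaching $P_0$ as one extra component then gives a path cover of $T$ with at most $\ell-1$ components, completing the induction.

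The delicate point of the argument is the bookkeeping in the inductive step: one must verify that removing the pendant path decreases the number of leaves by exactly one and that reinserting $P_0$ increases the number of path components by exactly one, so that each reduction trades precisely one leaf for one component. This is what preserves the strict inequality down to the base case, where the path already enjoys a surplus of one leaf over its single component. Beyond this, the only facts required are that stripping a pendant path keeps the graph connected and acyclic, and that $\ell\ge 3$ forces the existence of the branching vertex $b$ at which the walk from $u$ must terminate.
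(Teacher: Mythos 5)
Your proof is correct, and it takes a genuinely different route from the paper's. The paper also inducts on the number of leaves $x$, but its inductive step extracts a path $P$ joining \emph{two} leaves, deletes all edges incident to $V(P)$ other than those of $P$, and then applies the inductive hypothesis separately to each of the resulting subtrees, which forces a somewhat delicate global count ($1+j+\sum_{i}(x_i-1)\leq x-1$) over several pieces whose leaf sets partially overlap with that of $T$. You instead peel off a single pendant path from one leaf up to the first branching vertex $b$, observe that this removes exactly one leaf and creates no new one (since $\deg(b)$ stays at least $2$), and recurse on the one remaining tree; each step trades exactly one leaf for exactly one path component. Your version avoids the multi-piece bookkeeping and yields the same sharp bound of $\ell-1$ components with a cleaner invariant; the paper's version, by contrast, does not need the observation that $\ell\geq 3$ forces a branching vertex, since it only ever needs a leaf-to-leaf path. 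Both arguments are sound; yours is the more economical of the two, and the only points worth stating explicitly (which you do) are that the pendant path may degenerate to the single vertex $u$ when $u$ is adjacent to $b$ --- a singleton is still a legitimate path component under the paper's conventions --- and that $T'$ retains more than one vertex so the inductive hypothesis applies.
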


\begin{proof}

Let $T$ be a   tree with $x$ $>$ $1$ leaves. The proof is an inductive method on $x$.
If $x=2$, $T$  is a path component, the lemma holds true of course. Then the inductive assumption is, if a tree has at most $x-1$ leaves, it has a path cover which has less path components than the  leaves it has. Later, we show that if $T$ has $x$ $(>2)$ leaves, it must have a path cover with at most $x-1$ path components.

Since $x>2$, a path, say $P$ $=$ $u$, ..., $v$ $\neq$ $u$, can be identified in $T$, where $u$ and $v$ are both  leaves of $T$. We then delete those edges incident to the  vertices  of $P$ except those $P$ has. This gives rise to a spanning forest of $T$. Let $T_{1}$, ..., $T_{j}$, $T_{j+1}$, ..., $T_{k}$ be all the trees in the forest except $P$, where  $T_{i}$ for $1\leq i\leq j$ has only one vertex while the others  do not. Note that the vertex in $T_{i}$ for $1\leq i\leq j$ is also a leaf of $T$. Namely, one path can cover $T_{i}$ for  $1$ $\leq$ ${i}$ $\leq$ $j$. Moreover, $T_{i}$ for  $j+1$ $\leq$ ${i}$ $\leq$ $k$ has at most $x-1$ leaves because in addition to rejecting leaves $u$ and $v$, it has at most one leaf which does not act as  a leaf in $T$.  Let  $T_{j+1}$, $T_{j+2}$, ..., $T_{k}$ have $x_{j+1}$, $x_{j+2}$, ..., $x_{k}$ leaves  respectively. By the inductive assumption, $T_{i}$ for  $j+1\leq i\leq k$  must have a path cover with at most $x_{i}$ $-$ $1$ path components. Hence $T$ has a path cover with  at most $1$ $+$ $j$ $+$ $\sum_{j+1\leq i\leq k}(x_{i}-1)$ $\leq$ $1+j$ $+$ $\sum_{j+1\leq i\leq k}(x_{i})$ $-$ $(k-j)$ $\leq$ $1$ $+$ $j$ $+$ $(x-(2+j)+(k-j))$ $-$ $(k-j)$ $=$ $x-1$ path components. \qed
\end{proof}

\begin{theorem}
\label{internal-bound}
In number, a maximum internal spanning tree of $G$ has  less internal vertices than the  edges  a maximum path-cycle cover of $G$ has.
\end{theorem}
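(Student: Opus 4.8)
The plan is to chain together the two lemmas already established, using a single edge-counting identity to convert a statement about path \emph{components} into a statement about path \emph{edges}. First I would fix a maximum internal spanning tree $T$ of $G$, and let $n=|V|$ and $L$ denote the number of leaves of $T$. Since every internal vertex of $T$ is a non-leaf vertex, $T$ has exactly $n-L$ internal vertices, so it suffices to show that a maximum path-cycle cover of $G$ has at least $n-L+1$ edges.

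The heart of the argument is to apply Lemma~\ref{tree-path-number} to the tree $T$ itself. Assuming $G$ (and hence $T$) has more than one vertex, $T$ has at least two leaves, so $L>1$ and the lemma yields a path cover of $T$ consisting of some number $p$ of path components with $p\le L-1$. The key observation I would stress is that this object is simultaneously a path cover of the whole graph $G$: because $T$ is a spanning tree, its edge set is contained in $E(G)$, so a collection of vertex-disjoint paths built from edges of $T$ is a spanning subgraph of $G$ whose components are all paths. Hence it is a legitimate path cover of $G$.

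Next I would perform the elementary edge count. A path cover that spans all $n$ vertices in $p$ path components has precisely $n-p$ edges, since a path on $k$ vertices contributes $k-1$ edges and the vertex counts sum to $n$. Combined with $p\le L-1$, this path cover of $G$ therefore has $n-p\ge n-(L-1)=n-L+1$ edges. Consequently a maximum path cover of $G$ has at least $n-L+1$ edges, and then Lemma~\ref{path-cover-bound} lets me transfer this bound to a maximum path-cycle cover, which can only have more edges. Putting everything together, the number $n-L$ of internal vertices of $T$ satisfies $n-L<n-L+1\le$ (edges of a maximum path-cycle cover of $G$), which is exactly the claimed strict inequality.

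I do not expect a serious obstacle here, since the two lemmas do the real work; the only points requiring care are the two conceptual bridges rather than any hard computation. The first is recognizing that Lemma~\ref{tree-path-number} should be invoked on the maximum internal spanning tree $T$ viewed as an abstract tree, and that the resulting path cover of $T$ survives as a path cover of $G$. The second is the translation between ``few components'' and ``many edges'' via the identity $(\text{edges})=n-(\text{components})$, which is what converts the strict gap of one between leaves and path components in Lemma~\ref{tree-path-number} into the strict gap of one in the final inequality. I would also note the harmless degenerate case: the statement tacitly assumes $G$ has at least two vertices, so that $L\ge 2$ and Lemma~\ref{tree-path-number} applies.
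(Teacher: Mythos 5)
Your proof is correct and follows essentially the same route as the paper: both arguments apply Lemma~\ref{tree-path-number} to the maximum internal spanning tree, convert the component count into an edge count via $|E|=n-(\text{components})$, and finish with Lemma~\ref{path-cover-bound}. The only difference is presentational — you argue directly from the leaf count $L$ of $T$, whereas the paper phrases it as a contradiction against the maximality of a path cover $P^*$ — and your handling of the degenerate cases is fine.
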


\begin{proof}
Let $P^*$ be a maximum path cover of $G$ with $|P^*|$ path components. By the fact $|P^*|$ $+$ $|E(P^*)|$ $=$ $|V|$, those path components in $P^*$ have just $|V|$ $-$ $|P^*|$ edges. If $|P^*|$ $=$ $1$, then a maximum internal spanning tree of $G$ is a Hamilton path, the proof is trivial. Later, let $|P^*|$ $>$ $1$. If a spanning tree of $G$ has at least  $|E(P^*)|$ internal vertices, it must have at most $|P^*|$ leaves. Then by Lemma \ref{tree-path-number}, we can find a path cover of this tree with at most $|P^*|$ - 1 path components, or in other words, with at least $|E(P^*)|$ +1 edges. Thus, $G$ also has a path cover with $|E(P^*)|$ +1 edges, which means $P^*$ is not maximum, a contradiction.

By Lemma \ref{path-cover-bound}, the proof is done. \qed
\end{proof}

Due to Theorem \ref{internal-bound}, a simple algorithm arises to approximate MIST to a performance ratio 2: (1)find a maximum path-cycle cover of $G$, say $H$, in which  each  cycle component has at least four edges; (2)delete one edge from each cycle component in  $H$ to transform $H$ into $H'$ as a path cover of $G$; (3)link all path components in $H'$ into a spanning tree of $G$ by adding edges of $G$ to $H'$, where step (3) works because $G$ defaults to be connected. This is  a 2-approximation algorithm, because it results in a spanning tree of $G$ to which,  each cycle component of $H$ with $k$ $\geq$ $4$ edges contributes at least $k-2$ internal vertices; each path component of length one or two  contributes at least one internal vertex; each path component of length $k$ $\geq$ $3$  contributes at least $k-1$ internal vertices.

To ensure  a better performance ratio to approximate MIST, it is necessary to make those connected components in a maximum path-cycle cover contribute more internal vertices to that spanning tree to be constructed.  To ensure a performance ratio $\frac{4}{3}$, we have to reduce $G$ by deleting some of its edges and vertices , which will be stated in the next section.

\section{Edge and vertex reducing}
\label{pre-process}
Let $G$ = $(V,E)$ be  an  undirected connected simple graph. Deleting an edge of $G$ refers to removing that edge from $G$; deleting a vertex of $G$ refers to removing that vertex and the boundary edges from $G$. A deletion of an edge (or a vertex) of $G$ is \emph{safe}, if the deletion results in a subgraph
of $G$ which has a maximum internal spanning tree with no less internal vertices than those  a maximum internal spanning tree of $G$ has. Only by deleting some edges and vertices of $G$ safely, can we link those components in a maximum path-cycle cover into  a tree with as many internal vertices as we want.

 Two vertices are \emph{adjacent}, if they are both incident to one edge. Two vertices are adjacent \emph{respecting an edge}, if they are both incident to that edge.
 An edge of a graph is referred to as a \emph{cut edge} if deleting it can result in more connected components than those in that graph.

\begin{lemma}
\label{delete-edge-property}
If in $G$, an edge has both its ends  adjacent to  leaves respectively,  and is not a cut edge,  then the deletion of the edge is safe.
\end{lemma}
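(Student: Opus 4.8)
The plan is to exploit the fact that deleting an edge can never increase the optimum: since every spanning tree of $G-e$ is already a spanning tree of $G$, a maximum internal spanning tree of $G-e$ has at most as many internal vertices as one of $G$. Hence \emph{safe} amounts to proving the reverse inequality, and it suffices to exhibit, starting from any maximum internal spanning tree $T$ of $G$, a spanning tree of $G-e$ with no fewer internal vertices than $T$. Write $e=uv$, and let $a,b$ be leaves of $G$ adjacent to $u$ and $v$ respectively; these four vertices are distinct, and because $a$ and $b$ have degree $1$ their unique incident edges $ua$ and $vb$ lie in every spanning tree, in particular in $T$. If $e\notin T$ we are already done, so assume $e\in T$, whence $\deg_T(u)\ge 2$ and $\deg_T(v)\ge 2$ (each has at least its forced leaf-edge together with $e$), so both $u$ and $v$ are internal in $T$.

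First I would remove $e$ from $T$, splitting $T$ into two subtrees $T_u\ni u$ and $T_v\ni v$, and then reconnect them by a crossing edge $e'\neq e$ of $G$. Such an $e'$ exists precisely because $e$ is not a cut edge: $G-e$ is connected, so some edge of $G$ other than $e$ joins $V(T_u)$ to $V(T_v)$. Setting $T'=T-e+e'$ yields a spanning tree of $G-e$, and the only vertices whose internal status can change are the endpoints of $e$ (degree drops by one) and of $e'$ (degree rises by one). A degree increase can only create new internal vertices, so the sole danger is that $u$ or $v$ falls from internal to leaf, which happens only when its $T$-degree was exactly $2$. This motivates a case split on $\deg_T(u)$ and $\deg_T(v)$: if both are at least $3$, any crossing edge works; if exactly one of them, say $u$, has degree $2$, then its only $T$-neighbours are $a$ and $v$, so after deleting $e$ the component $T_u$ is just the edge $ua$, and since $a$ is a leaf the reconnecting edge $e'$ must be incident to $u$ — choosing it restores $u$'s degree while $v$ remains internal.

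The step I expect to be the main obstacle is the remaining case $\deg_T(u)=\deg_T(v)=2$, since there a naive swap would turn both $u$ and $v$ into leaves at once. The resolution is to show this case cannot occur under the hypotheses: as above $\deg_T(u)=2$ forces $T_u$ to be the single edge $ua$ and $\deg_T(v)=2$ forces $T_v$ to be the single edge $vb$, and because $a$ and $b$ are leaves the only edge of $G$ joining these two components is $e$ itself, making $e$ a cut edge and contradicting the assumption. Thus in every admissible case I obtain a $T'$ with at least as many internal vertices as $T$, proving that $G-e$ has a maximum internal spanning tree matching that of $G$, and hence that the deletion of $e$ is safe.
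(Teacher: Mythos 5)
Your proposal is correct and follows essentially the same route as the paper: start from a maximum internal spanning tree $T^*$, dispose of the case $e\notin T^*$, otherwise split $T^*$ into $T_u$ and $T_v$, use the non-cut-edge hypothesis to find a crossing edge $e'$, and exploit the forced pendant leaf-edges at $u$ and $v$ to show both stay internal after the swap. If anything, your treatment is slightly more explicit than the paper's: you isolate the degree-$2$ cases cleanly (showing a degree-$2$ endpoint forces its component to be a single pendant edge, so the crossing edge is forced to be incident to it, and that both endpoints having degree $2$ would make $e$ a cut edge), whereas the paper reaches the same conclusions via a terser case split on whether the crossing edge meets $u$.
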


\begin{proof}
Let $(u,v)$ be an edge other than a cut one of $G$, where $u$ and $v$ are adjacent to leaves respectively.
Let $T^{*}$ be a maximum internal spanning tree of $G$.  If  $(u,v)$ $\notin$ $E(T^{*})$, then the proof is done. Later, let $(u,v)$ $\in$ $E(T^{*})$.  Let $T_{u}$, $T_{v}$ be those two sub trees of $T^*$ resulted by removing  $(u,v)$ from $T^{*}$, where  $u$ $\in$ $V(T_{u})$, $v$ $\in$ $V(T_{v})$. Note that a leaf of $G$ must be a leaf of $T^*$. Thus $u$ and $v$ must be internal in $T^{*}$,  otherwise, $T^*$ is not connected.   Since $(u,v)$ is not a cut edge of $G$, there must be  an edge $(u^{'},v^{'})$ $\in$ $E(G)$ $\setminus$ $\{(u,v)\}$ with $u^{'}$ $\in$ $V(T_{u})$ and $v^{'}$ $\in$ $V(T_{v})$. Then removing  $(u,v)$ from $T^{*}$, and adding  $(u^{'},v^{'})$ to it must result in a spanning tree of $G$ as well as
$G[E(G)$ $\setminus$ $\{(u,v)\}]$, which can be denoted as $G[(E(T^*)$ $\setminus$ $\{(u,v)\})$ $\cup$ $\{(u',v')\}]$.
The vertices $u$ and $v$ are both internal in this spanning tree because,

$(1)$If $u$ = $u'$, then $v$ $\neq$ $v'$. Thus, $v$ is internal in $T_v$, and moreover  in $G[(E(T^*)$ $\setminus$ $\{(u,v)\})$ $\cup$ $\{(u',v')\}]$. Adding $(u',v')$ to $T^*[E(T^*)$ $\setminus$ $\{(u,v)\}]$ must make $u$ to be internal in  $G[(E(T^*)$ $\setminus$ $\{(u,v)\})$ $\cup$ $\{(u',v')\}]$.

$(2)$If $u$ $\neq$ $u'$, then $u$ is internal in $T_u$, and moreover in $G[(E(T^*)$ $\setminus$ $\{(u,v)\})$ $\cup$ $\{(u',v')\}]$. No matter whether $v$ = $v'$ or not,  adding $(u',v')$ to $T^*[E(T^*)$ $\setminus$ $\{(u,v)\}]$ must make $v$ to be internal in  $G[(E(T^*)$ $\setminus$ $\{(u,v)\})$ $\cup$ $\{(u',v')\}]$.  \qed
\end{proof}

Repeating the safe edge deletion stated by Lemma \ref{delete-edge-property} until no edge exists to subject to Lemma \ref{delete-edge-property}, must transform $G$ into a subgraph of $G$ which  subjects to the following lemma.

\begin{corollary}
\label{delete-edges}
There is a subgraph of $G$ in which, $(1)$an edge must be a cut edge, if its two ends both are adjacent to leaves respectively; $(2)$a maximum internal spanning tree of it has no less internal vertices than those  a maximum internal spanning tree of $G$ has.
\end{corollary}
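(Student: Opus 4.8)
The plan is to obtain the desired subgraph by iterating the safe edge deletion of Lemma~\ref{delete-edge-property} until it can no longer be applied, and then to verify the two claimed properties. Concretely, I would set $G_0 = G$ and construct a chain of spanning subgraphs $G_0, G_1, G_2, \ldots$ in the following way: given $G_i$, if $G_i$ contains an edge $e_i$ that is not a cut edge of $G_i$ and whose two ends are both adjacent, in $G_i$, to leaves of $G_i$, then delete $e_i$ to form $G_{i+1} = G_i[E(G_i) \setminus \{e_i\}]$; otherwise stop and write $G_t$ for this final subgraph. Since each step removes exactly one edge, the construction terminates after at most $|E(G)|$ steps, and every $G_i$ retains the vertex set $V$, so $G_t$ is a spanning subgraph, and in particular a subgraph, of $G$.

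Property~$(1)$ then follows directly from the stopping rule: the construction halts only when $G_t$ has no edge that is simultaneously a non-cut edge and has both of its ends adjacent to leaves, which is to say that in $G_t$ every edge whose two ends are both adjacent to leaves must be a cut edge. Property~$(2)$ follows by chaining safety. Applying Lemma~\ref{delete-edge-property} to the current graph $G_i$ at each step guarantees that the deletion of $e_i$ is safe, i.e. that $G_{i+1}$ has a maximum internal spanning tree whose number of internal vertices is no less than that of a maximum internal spanning tree of $G_i$. Composing these inequalities along the whole chain yields that a maximum internal spanning tree of $G_t$ has no fewer internal vertices than one of $G_0 = G$, which is exactly Property~$(2)$.

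The only point demanding care, and the one I expect to be the crux of a fully rigorous write-up, is that \emph{leaf}, \emph{cut edge}, and \emph{safe} are all defined relative to the current graph in the chain rather than to the original $G$. Deleting an edge can create new leaves and can change which remaining edges are cut edges, so the hypotheses of Lemma~\ref{delete-edge-property} must be re-verified against $G_i$ at each step; this is precisely what the construction above does by always inspecting $G_i$. Moreover, because each deleted edge $e_i$ is a non-cut edge of $G_i$, connectivity is preserved, so every $G_i$, and in particular $G_t$, is connected and admits a spanning tree. This keeps the notion of a maximum internal spanning tree meaningful at every stage and makes the chain of safety inequalities legitimate. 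Beyond this bookkeeping the argument is routine, its entire substance being supplied by Lemma~\ref{delete-edge-property}.
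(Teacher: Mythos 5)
Your proposal is correct and follows essentially the same route as the paper, which likewise obtains the subgraph by repeating the safe edge deletion of Lemma~\ref{delete-edge-property} until no applicable edge remains; your additional remarks on re-verifying the hypotheses against the current graph and on connectivity being preserved are careful bookkeeping that the paper leaves implicit.
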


A vertex is referred to as a \emph{cut vertex} of a graph, if deleting it  results in more connected components than those in that graph.    A cut vertex of a graph is \emph{super}, if deleting it results in at least 2 more connected components than those in that graph. Since $G$ defaults to be connected, deleting a super cut vertex of $G$ must result in at least 3 connected components.  For identifying the safe deletions of vertices of $G$, we concentrate on those leaves which are adjacent to  super cut vertices.

\begin{lemma}
\label{leaf-delete-property}
If in $G$, a leaf is adjacent to a super cut vertex, then the deletion of it is safe.
\end{lemma}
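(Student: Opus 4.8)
The plan is to show that if a leaf $w$ is adjacent to a super cut vertex $v$, then deleting $w$ cannot decrease the maximum number of internal vertices. The natural approach mirrors the proof of Lemma~\ref{delete-edge-property}: take a maximum internal spanning tree $T^*$ of $G$, and exhibit a spanning tree of $G - w$ (the subgraph obtained by deleting $w$) whose internal vertex count is at least $|\text{internal}(T^*)|$. Since $w$ is a leaf, its unique neighbor is exactly the super cut vertex $v$, so in any spanning tree $w$ appears as a leaf hanging off $v$ via the edge $(v,w)$. The key point to exploit is that deleting $v$ from $G$ produces at least three connected components.

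First I would set up the bookkeeping. In $T^*$, the vertex $w$ is necessarily a leaf (degree~$1$), attached to $v$. Removing $w$ from $T^*$ yields a spanning tree $T'$ of $G - w$; the only vertices whose internal/leaf status can change are $v$ and $w$ itself. We lose $w$, but $w$ was already a leaf, so it contributed nothing to the internal count. The danger is that $v$ might drop from internal to leaf in $T'$, i.e. that in $T^*$ the vertex $v$ had degree exactly $2$, with one of its two tree-edges being $(v,w)$. So the whole argument reduces to handling this single bad case: $\deg_{T^*}(v) = 2$ and removing $w$ leaves $v$ with tree-degree $1$.

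The main obstacle, and where the super cut vertex hypothesis does the work, is repairing this bad case. Since $v$ is a super cut vertex of $G$, deleting $v$ from $G$ splits it into at least three components $C_1, C_2, C_3, \dots$; deleting $v$ from $G-w$ still leaves at least two components other than $\{w\}$ (since $w$ alone forms one of the pieces, as $w$'s only neighbor is $v$). In $T^* - \{w\}$, the vertex $v$ has one remaining tree-edge, say into component $C_i$. I would argue that there exists an edge of $G$, not incident to $v$ or $w$, joining $C_i$ to another component $C_j$, which can be used to reconnect and simultaneously raise the degree of some vertex, or — more directly — that I can reroute so that $v$ acquires a second neighbor in the spanning tree from a different component. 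The cleanest route is: build the spanning tree of $G-w$ greedily so that $v$ is forced to have degree at least $2$, using the fact that $v$ connects to at least two distinct components of $(G-w) - v$, hence must be incident to at least two tree-edges in any spanning tree of the connected graph $G - w$. This last observation is in fact decisive: in any spanning tree of a connected graph, a cut vertex that separates the graph into $k$ pieces must have tree-degree at least $k$, so $v$ automatically remains internal once $w$ is gone.

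Thus the essential steps, in order, are: (i) confirm $G - w$ is connected (immediate, since $w$ is a leaf and its deletion removes only the pendant edge $(v,w)$); (ii) confirm $v$ separates $G - w$ into at least two components, using that $v$ is super in $G$ and that $w$'s removal only eliminates the trivial component $\{w\}$; (iii) conclude that in every spanning tree of $G-w$ the vertex $v$ has degree at least $2$, hence is internal; and (iv) compare counts, noting that an optimal spanning tree of $G-w$ keeps every vertex of $V \setminus \{w\}$ with the same opportunity to be internal as before, while $v$ is guaranteed internal and $w$ contributed nothing. I expect step~(ii) to be the crux, since it is where the distinction between an ordinary cut vertex and a \emph{super} cut vertex is indispensable — an ordinary cut vertex could be reduced to a non-cut vertex in $G - w$ and would then fail to force $v$ internal. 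Assembling (iii) and (iv) then yields a maximum internal spanning tree of $G - w$ with at least as many internal vertices as $T^*$, establishing that the deletion of $w$ is safe.
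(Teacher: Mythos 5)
Your proof is correct and takes essentially the same approach as the paper's: both rest on the observation that a vertex separating a connected graph into $k$ components must have degree at least $k$ in every spanning tree, so the super cut vertex stays internal after the pendant leaf is removed. The only (cosmetic) difference is that the paper applies this fact in $G$ to get $\deg_{T^*}(v)\geq 3$ and then deletes the leaf, whereas you apply it in $G-w$ to get degree at least $2$ directly.
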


\begin{proof}
Let $T^{*}$ be a maximum internal spanning tree of $G$. Let $u$ be a super cut vertex of $G$, while $v$ be a leaf adjacent to $u$ respecting an edge of $G$. Then the degree of $u$ in $T^{*}$ is  at least three. Namely, deleting $u$ from $T^{*}$ must yield at least three connected components.  Since $v$ is a leaf of $T^{*}$,  $v$ can be deleted from $T^*$ with $u$ as an internal vertex of $T^*[V\setminus\{v\}]$, where $T^*[V\setminus\{v\}]$ is  a spanning tree of $G[V\setminus\{v\}]$.   \qed
\end{proof}

Lemma \ref{leaf-delete-property} indicates that  those leaves of a graph which are adjacent to super cut vertices have no contribution for finding a maximum internal spanning tree. Thus, by the safe deletions of edges and leaves, we can get a graph as stated in,

\begin{corollary}
\label{delete-leaves}
There is a subgraph of $G$ which subjects to,
\begin{itemize}
    \item[$(1)$] an edge must be a cut edge, if its two ends each is adjacent to a leaf;
    \item[$(2)$] a cut vertex is not super, if it is adjacent to a leaf;
    \item[$(3)$] A maximum internal spanning tree of the subgraph has no less internal vertices than those  a maximum internal spanning tree of  $G$ has.
\end{itemize}
\end{corollary}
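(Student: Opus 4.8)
The plan is to realize the claimed subgraph as the terminal graph of a single reduction process that interleaves the two kinds of safe deletions already justified. I would run a while-loop on the current graph $G'$ (initialized to $G$): as long as $G'$ contains either \textbf{(a)} an edge whose two ends are each adjacent to a leaf and which is not a cut edge, or \textbf{(b)} a leaf adjacent to a super cut vertex, I apply the corresponding safe deletion---deleting the edge by Lemma \ref{delete-edge-property} in case \textbf{(a)}, or deleting the leaf by Lemma \ref{leaf-delete-property} in case \textbf{(b)}. The subgraph returned when the loop halts is the desired one. A point to check first is that both operations preserve connectivity: an edge deleted under \textbf{(a)} is by hypothesis not a cut edge, and a leaf together with its unique incident edge can be removed without disconnecting the rest. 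Hence $G'$ stays connected throughout, so the hypotheses of both lemmas, which presume a connected graph, remain valid at every iteration.

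For termination I would track the potential $\Phi(G') = |V(G')| + |E(G')|$, a nonnegative integer. An edge deletion lowers $\Phi$ by $1$, while deleting a leaf removes one vertex together with its single incident edge and so lowers $\Phi$ by $2$. Since $\Phi$ strictly decreases at every iteration, the loop performs only finitely many deletions and halts. Conditions $(1)$ and $(2)$ then hold by the halting criterion: the loop stops precisely when neither case \textbf{(a)} nor case \textbf{(b)} applies, i.e.\ when every edge with both ends adjacent to leaves is a cut edge, and no leaf is adjacent to a super cut vertex---equivalently, no cut vertex adjacent to a leaf is super. Condition $(3)$ follows by chaining the safety guarantees: each individual deletion is safe, so the post-deletion graph admits a maximum internal spanning tree whose internal-vertex count is at least that of the pre-deletion graph; composing these inequalities along the finite deletion sequence gives that the terminal subgraph has a maximum internal spanning tree with no fewer internal vertices than one of $G$.

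The main obstacle I anticipate is the interaction between the two reduction rules: one cannot simply exhaust all edge deletions (as in Corollary \ref{delete-edges}) and afterwards all leaf deletions, because each rule can reintroduce a violation of the other---a leaf deletion may expose a fresh non-cut edge lying between two leaf-adjacent vertices, and an edge deletion may raise the degree of a severed component so that a leaf-adjacent cut vertex becomes super. Folding both rules into one loop governed by a strictly decreasing potential is the device that resolves this, forcing convergence to a fixed point that simultaneously satisfies $(1)$ and $(2)$ rather than oscillating between them. I would make the termination argument carry the weight here, since it is what rules out the loop cycling indefinitely while repairing one condition at the expense of the other.
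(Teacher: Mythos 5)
Your proof is correct and reaches the corollary by essentially the same means as the paper: repeated application of the two safe deletions from Lemmas \ref{delete-edge-property} and \ref{leaf-delete-property}, with connectivity preserved at every step, termination guaranteed, and item $(3)$ obtained by chaining the safety inequalities. The difference is in the scheduling. The paper does it in two phases: it first exhausts the edge deletions (this is Corollary \ref{delete-edges}, yielding a graph satisfying $(1)$ and $(3)$) and then exhausts the leaf deletions, arguing that a leaf deletion cannot reintroduce a violation of $(1)$. Your stated reason for rejecting that schedule --- that ``a leaf deletion may expose a fresh non-cut edge lying between two leaf-adjacent vertices'' --- is in fact false: removing a leaf $v$ together with its pendant edge does not change which of the remaining edges lie on cycles, so no cut edge becomes a non-cut edge; and since the neighbour $u$ of $v$ is a super cut vertex it has degree at least $3$ in the current graph, so $u$ does not become a leaf and no new leaf appears. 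Hence no new edge with both ends adjacent to leaves can arise, and the paper's two-phase reduction is sound. (The opposite interference you mention --- an edge deletion turning a leaf-adjacent cut vertex into a super one --- can indeed happen, which is precisely why the edge phase must come first in the paper's ordering.) Your interleaved loop with the potential $|V(G')|+|E(G')|$ is a valid and slightly more robust alternative: it spares you the invariant-preservation argument at the cost of an explicit termination argument, and both routes establish the corollary.
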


\begin{proof}
By Corollary \ref{delete-edges}, let $G_1$ be a subgraph of $G$ which subjects to Item (1) and (3) of the corollary.
If a leaf is adjacent to a super cut vertex of $G_1$, then by Lemma \ref{leaf-delete-property}, it can be deleted from $G_1$. This leaf deletion for $G_1$ must result in a subgraph of $G_1$, which also subjects to Item (1) and (3), because the deletion is safe, and moreover, does not bring in any new cut edge to $G_1$, and take away any  existing cut edge with two ends adjacent to leaves from $G_1$. Let $G_2$ be a subgraph of $G_1$ resulted by repeating such operation, until no  leaf can be found for deletion. Then, $G_2$ must subject to  Item $(1)$, $(2)$ and $(3)$. \qed
\end{proof}


%
We directly name by Reduce$(G)$ the algorithm for $G$ to delete its edges and vertices safely by the methods in Corollary \ref{delete-edges} and  \ref{delete-leaves}, without formalizing its details.

Recall that $G$ =$(V,E)$. It takes $O(|V|$ $+$ $|E|)$ time to decide whether an edge of $G$ has two ends adjacent to respective leaves. Moreover, it takes   $O(|V|+|E|)$ time to decide whether an edge is a cut one. So completing the deletions of edges for $G$ takes  $O(|E|(|V|+|E|))$ time. It takes  $O(|V|+|E|)$ time to decide whether a vertex of $G$ is a super cut vertex, or whether it is adjacent to a leaf. There are at most $|V|$ leaves to be deleted. So completing the deletions of vertices  takes $O(|V|(|V|+|E|))$ time. To sum up, the time complexity of  Reduce$(G)$ is $O((|V|+|E|)^{2})$.

A subgraph of  $G$ is \emph{reduced} if it subjects to Corollary \ref{delete-leaves}. A reduced subgraph of $G$ must have the same set of internal vertices as $G$ has. Moreover, every internal vertex of a reduced graph is adjacent to at most one leaf. By Corollary \ref{delete-edges} and \ref{delete-leaves}, Reduce$(G)$  must return a reduced subgraph of $G$. In the following, we show that it suffices to approximate MIST on a reduced subgraph of $G$ for approximating MIST on $G$. Actually, each   spanning tree of a reduced subgraph of $G$ can turn into a spanning tree of $G$ with those internal vertices unchanged. That is,
\begin{lemma}
\label{alg-internal-number}
For each spanning tree of a reduced subgraph of $G$, $G$ has a spanning tree  which has the same set of internal vertices as  the spanning tree of that reduced subgraph of $G$ has.
\end{lemma}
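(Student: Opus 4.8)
The plan is to prove this by induction on the number of deletion operations that Reduce$(G)$ performs to transform $G$ into the reduced subgraph $G'$. Since Reduce$(G)$ builds $G'$ through a finite sequence of safe edge deletions (Lemma \ref{delete-edge-property}) and safe leaf deletions (Lemma \ref{leaf-delete-property}), it suffices to reverse a single such deletion at the level of spanning trees: if $H'$ is obtained from a connected graph $H$ by one safe deletion, then every spanning tree of $H'$ extends to a spanning tree of $H$ with the same set of internal vertices. Chaining these extensions from $G'$ back to $G$, and using that ``having the same internal vertex set'' is a transitive relation, then yields the lemma. Note that all intermediate graphs stay connected, since deleting a non-cut edge or a leaf preserves connectivity, so each of them (and $G'$) genuinely has spanning trees.

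First I would dispose of the edge-deletion case. If $H' = H[E(H)\setminus\{e\}]$ for a non-cut edge $e$, then $V(H') = V(H)$ and every edge of $H'$ is an edge of $H$, so any spanning tree of $H'$ is already a spanning tree of $H$ with an unchanged internal vertex set, and no extension is needed. The substantive case is a leaf deletion, where $H' = H[V(H)\setminus\{v\}]$ for a leaf $v$ of $H$ adjacent to a super cut vertex $u$. Given a spanning tree $T'$ of $H'$, I would form $T$ by adjoining to $T'$ the vertex $v$ together with the edge $(u,v)$. Adding a new leaf to a tree keeps it a tree, so $T$ is a spanning tree of $H$; moreover $v$ is a leaf of $T$, and the only vertex whose degree differs between $T'$ and $T$ is $u$.

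It then remains to check that the internal vertex set is preserved, and this is exactly where the ``super'' hypothesis is used. I would first record the elementary fact that a cut vertex of a connected graph is internal in every spanning tree of it: deleting a cut vertex $w$ leaves at least two components, so deleting $w$ from any spanning tree also leaves at least two components, forcing $\deg(w)\geq 2$ in that tree. The point is that $u$ remains a cut vertex after $v$ is removed. Indeed, since $u$ is a super cut vertex of $H$, deleting $u$ from $H$ produces at least three components, one of which is the singleton $\{v\}$ because $v$ is a leaf whose only neighbor is $u$; hence deleting $u$ from $H' = H[V(H)\setminus\{v\}]$ still leaves at least two components, so $u$ is a cut vertex of $H'$. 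Consequently $u$ is internal in $T'$, and attaching the leaf $v$ to $u$ leaves $u$ internal while introducing $v$ only as a new leaf. Thus $T$ and $T'$ have exactly the same internal vertices, which completes the inductive step.

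The main obstacle, and the only place the argument could fail, is guaranteeing that reattaching a deleted leaf does not accidentally promote its neighbor from a leaf of $T'$ into an internal vertex of $T$, which would enlarge the internal vertex set. The super-cut-vertex condition baked into the safe leaf deletions is precisely what forbids this, by keeping the neighbor a cut vertex of the reduced graph and hence forcibly internal in every one of its spanning trees. Everything else is routine bookkeeping about connectivity and about the fact that only $u$'s degree changes upon reattachment.
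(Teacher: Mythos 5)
Your proof is correct and follows the same basic strategy as the paper's two-sentence argument---reattach the deleted leaves to the spanning tree $T_1$ of the reduced subgraph---but you carry it out one deletion at a time and, crucially, supply the justification the paper leaves implicit: the neighbour $u$ of each deleted leaf remains a cut vertex after the deletion (since it was a \emph{super} cut vertex beforehand) and a cut vertex of a connected graph is internal in every spanning tree of it, so reattaching the leaf cannot promote $u$ from a tree-leaf to a tree-internal vertex. The paper merely asserts that the resulting spanning tree of $G$ ``must have the same set of internal vertices as $T_1$ has''; your cut-vertex observation is precisely the missing reason that assertion holds.
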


\begin{proof}
Let $G_1$ be a reduced subgraph of $G$, while $T_1$ be a spanning tree of $G_1$. Since $G_1$ has the same set of internal vertices as $G$, a spanning tree of $G$ can be obtained by adding to $T_1$ the leaves in $V(G)$ $\setminus$ $V(G_1)$. Such a spanning tree of $G$ must have the same set of internal vertices as that  $T_1$ has. \qed
\end{proof}

Let $G_{1}$ be a reduced subgraph of $G$. Let $I(G)$, $I(G_{1})$ be the sets of internal vertices of the maximum internal spanning trees of $G$ and $G_{1}$ respectively. Since $G_1$ is a subgraph of $G$, $|I(G)|$ $\geq$ $|I(G_1)|$. Then $|I(G)|$ $=$ $|I(G_1)|$ follows from Corollary \ref{delete-leaves}.  If $T_{1}$ is a  spanning tree of $G_{1}$ with $I(T_{1})$ as its set of internal vertices, then a spanning tree of $G$, say $T$ can be made from $T_1$ by adding the leaves in $V(G)$ $\setminus$ $V(G_1)$ to $T_1$. By Lemma \ref{alg-internal-number}, $|I(T)|$ = $|I(T_1)|$. It follows that $\frac{|I(G)|}{|I(T)|}$ $=$ $\frac{|I(G_{1})|}{|I(T_{1})|}$. In other words, if MIST can be approximated to a substantial performance ratio on reduced graphs,  so can MIST be done  on undirected simple graphs. In the next section, we focus on reduced graphs to ask for their spanning trees.

\section{How to find a spanning tree in a reduced graph}
\label{algorithm}
In this section, $G_{1}$ always stands for  a connected reduced graph instead of a tree.
Ordinarily, a maximum path-cycle cover can be found in $O(n^{2}m)$ time in an undirected simple graph, even if each cycle component is restricted  to have at least 4 edges \cite{Hartvigsen1984}, where $n$ = $|V(G_1)|$ and $m$ = $|E(G_1)|$.
 We focus on finding a spanning tree of $G_1$ with at least $\frac{3}{4}$ times as   many internal vertices as those a maximum internal spanning tree of $G_1$ has. By Theorem \ref{internal-bound},  it suffices to construct a spanning tree of $G_1$ which has at least $\frac{3}{4}$ times as many internal vertices as the edges a maximum path-cycle cover of $G_1$ has. To hit this point, we try to reconstruct the maximum path-cycle cover of $G_1$ at first.

\subsection{Reconstruction of a maximum path-cycle cover}

We also treat a maximum path-cycle cover as a set of cycle components and path components. The reconstruction aims to transform a maximum path-cycle cover of $G_1$ into such one  that each path component can contribute as many internal vertices as its edges to that spanning  tree to be constructed, if its length is no larger than three.
A path component is  a \emph{singleton} if its length is zero.
A vertex of a path component is  \emph{inner} if its degree in it is $2$, and an \emph{endpoint} otherwise.  The \emph{endpoint} of a singleton is
the singleton itself.

Note that although two vertices in distinct connected components in a maximum
 path-cycle cover cannot be adjacent respecting any edge of the maximum path-cycle cover,
 they can be \emph{adjacent} respecting an edge of $G_{1}$.
If a maximum path-cycle cover of $G_1$ contains only one connected component, then getting a maximum internal spanning tree of $G_1$ is trivial.
Thus in what follows, a maximum path-cycle cover of $G_1$ is  assumed to have more than one connected component.   Since $G_1$ defaults to be connected, every connected component in a maximum path-cycle cover must have at least one vertex adjacent to a vertex outside it respecting an edge of $G_1$.

A maximum path-cycle cover can be transformed into one, in which each path component has one endpoint adjacent to a vertex outside it if its length is no more than 3. Those path components of length at most 2 can be dealt with into one as the following lemma states.

\begin{lemma}
\label{path-of-length-no-larger-than-2-property}
There exists such a maximum path-cycle cover of $G_1$  that, if a path component is of length no larger than $2$, then  it has one endpoint adjacent to a vertex outside it.

\end{lemma}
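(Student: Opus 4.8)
The plan is to start from an arbitrary maximum path-cycle cover $H$ of $G_1$ (with every cycle component carrying at least four edges) and to repair it component by component so that each short path component acquires an endpoint adjacent to the outside, without ever changing the total number of edges. Since $G_1$ is connected and $H$ is assumed to have more than one component, every connected component of $H$ has at least one vertex adjacent, respecting an edge of $G_1$, to a vertex outside it; this is the single global fact I would lean on throughout.

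First I would dispose of the two easy lengths. A singleton $\{v\}$ is its own endpoint, and by connectivity $v$ is adjacent to a vertex outside it, so the property holds for free. A path component of length $1$, say $u-v$, has both of its vertices as endpoints, so the vertex that connectivity guarantees to be adjacent to the outside is automatically an endpoint; again nothing has to be done. Thus only length-$2$ components can be problematic, and these carry the real content of the lemma.

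The main work is the length-$2$ case. Take a component $a-b-c$ and suppose it is \emph{bad}, i.e.\ neither endpoint $a$ nor $c$ is adjacent to any vertex outside $\{a,b,c\}$. Then the outside-neighbour promised by connectivity must be a neighbour $w$ of the inner vertex $b$. I would first show, using that $G_1$ is reduced, that $a$ and $c$ are in fact adjacent in $G_1$: since $N(a)\subseteq\{b,c\}$ and $N(c)\subseteq\{a,b\}$, if the edge $(a,c)$ were absent then $N(a)\subseteq\{b\}$ and $N(c)\subseteq\{b\}$, so both $a$ and $c$ would be leaves whose only neighbour is $b$, making the internal vertex $b$ adjacent to two leaves and contradicting the reduced-graph property that every internal vertex is adjacent to at most one leaf (recorded after Corollary~\ref{delete-leaves}). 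With the triangle on $\{a,b,c\}$ in hand, I would re-route the path inside the triangle to $b-a-c$ (using the edges $(b,a)$ and $(a,c)$), which keeps the component a length-$2$ path on the same three vertices, keeps the edge count of the cover unchanged, and now exhibits $b$ as an endpoint adjacent to the outside vertex $w$.

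Finally I would note that this repair is purely local: it alters only edges among $\{a,b,c\}$, leaves every cycle component and every other path component untouched, and preserves the number of edges, hence maximality. Performing it on each bad length-$2$ component therefore yields a maximum path-cycle cover in which every path component of length at most $2$ has an endpoint adjacent to a vertex outside it. The step I expect to be the crux is establishing the edge $(a,c)$; everything else is either immediate from connectivity or a one-line rerouting, whereas this step is exactly where the earlier vertex-and-edge reduction is cashed in.
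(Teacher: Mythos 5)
Your proof is correct and follows essentially the same route as the paper: in the only nontrivial case (a length-$2$ component with both endpoints locked in), you establish the edge between the two endpoints and rotate the path inside the resulting triangle so that the inner vertex becomes an endpoint, exactly as the paper does with its replacement $v_1v_2v_3 \mapsto v_3v_1v_2$. The only (cosmetic) difference is that the paper reaches the triangle edge via a case analysis on whether the endpoints are leaves of $G_1$, whereas you derive it in one step from the reduced-graph property that an internal vertex is adjacent to at most one leaf.
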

\begin{proof}
    Let $H$ be a maximum path-cycle cover of $G_1$. Let $p$ be a  path component of length no larger than 2 in $H$.
    If  $p$ is a singleton or has one edge, one endpoint of $p$ must be adjacent to a vertex outside $p$ respecting an edge of $G_1$, because $G_{1}$ is connected.

    If the length of $p$ is 2, the endpoints of $p$ cannot both be leaves of $G_{1}$, because if so, $G_{1}$ is a tree, or not reduced. If $p$ has just one endpoint as a leaf of $G_1$, then the other endpoint of $p$ must be adjacent to a vertex outside $p$ respecting an edge of $G_1$, because $G_1$ is connected and simple.

    If the length of $p$ is 2, and either of the two endpoints of $p$ is not a leaf of $G_1$ and  not adjacent to any vertex outside  $p$ respecting an edge of $G_1$, then the two endpoints of  $p$ are adjacent respecting an edge of $G_1$. In this situation, $p$ can be replaced by another path component of length 2. Concretely, let $p=$ $v_1$ $v_2$ $v_3$, then $(v_1,v_3)$ must be an edge of $G_1$. Thus $q$ = $v_3$ $v_1$ $v_2$ is also a path component of  length 2. Since $G_1$ is connected, as one endpoint of $q$, $v_2$ must be  adjacent to a vertex outside $q$ respecting an edge of $G_1$. So $H\setminus\{p\}\cup\{q\}$ is also a maximum path-cycle cover of $G_1$. Such kind of replacement can be done for every  path component of length 2, if it has two endpoints adjacent to each other but adjacent to no vertex outside it respecting an edge in $G_1$. When no path component of length 2 can be replaced, $H$ must be transformed into a maximum path-cycle cover as what the lemma states.    \qed
\end{proof}

To  deal with those path components of length three, we have to exclude a situation where two endpoints of a path component are both leaves of $G_1$.

\begin{lemma}
\label{no-K1-3-path}
In a maximum path-cycle cover of $G_1$, if a path component has three edges, then its two endpoints are not both leaves of $G_1$.
\end{lemma}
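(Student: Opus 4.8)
The plan is to argue by contradiction: suppose a path component $p = v_1 v_2 v_3 v_4$ with three edges occurs in a maximum path-cycle cover of $G_1$ and that \emph{both} endpoints $v_1$ and $v_4$ are leaves of $G_1$. I would then derive a contradiction purely from the reduced structure of $G_1$ guaranteed by Corollary \ref{delete-leaves}, together with the standing assumption that $G_1$ is a connected reduced graph which is not a tree. Notice that no property of maximality is actually needed; the edges $(v_1,v_2),(v_2,v_3),(v_3,v_4)$ of $p$ lie in $G_1$, and that is all the argument uses.

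The first step is to record the immediate consequences of $v_1$ and $v_4$ being leaves: each has degree $1$ in $G_1$, so the unique neighbor of $v_1$ is $v_2$ and the unique neighbor of $v_4$ is $v_3$. Hence $v_2$ is adjacent to the leaf $v_1$ and $v_3$ is adjacent to the leaf $v_4$, which means the middle edge $(v_2,v_3)$ has both of its ends adjacent to leaves. By item $(1)$ of Corollary \ref{delete-leaves}, the edge $(v_2,v_3)$ must therefore be a cut edge of $G_1$. Let $A$ and $B$ be the two sides obtained by removing $(v_2,v_3)$, with $v_1,v_2 \in A$ and $v_3,v_4 \in B$, and observe that $(v_2,v_3)$ is the \emph{only} edge of $G_1$ joining $A$ to $B$.

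The heart of the proof is a short case analysis on the sizes of $A$ and $B$, driven by item $(2)$ of Corollary \ref{delete-leaves}, which forbids any super cut vertex from being adjacent to a leaf. If $A$ strictly contains $\{v_1,v_2\}$, then deleting the vertex $v_2$ from $G_1$ simultaneously isolates $v_1$ (whose only neighbor was $v_2$), detaches all of $B$ (since $(v_2,v_3)$ was the unique crossing edge), and leaves at least one further component inside $A\setminus\{v_1,v_2\}$; this produces at least three components, so $v_2$ is a super cut vertex adjacent to the leaf $v_1$, contradicting item $(2)$. The symmetric argument applied to $v_3$ and the leaf $v_4$ forces $B=\{v_3,v_4\}$ as well. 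But then $V(G_1)=\{v_1,v_2,v_3,v_4\}$ and $E(G_1)=\{(v_1,v_2),(v_2,v_3),(v_3,v_4)\}$, i.e. $G_1$ coincides with the path $p$ and is thus a tree, contradicting the standing assumption that $G_1$ is not a tree (equivalently, that its maximum path-cycle cover has more than one component).

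The main obstacle, and the step I would write most carefully, is verifying the superness of $v_2$ (and symmetrically $v_3$): I must use the cut-edge status of $(v_2,v_3)$ to argue that deleting $v_2$ does two things at once — isolating the pendant $v_1$ and severing the entire block $B$ — so that the presence of even one extra vertex in $A$ pushes the component count from the ``one extra'' of an ordinary cut vertex to the ``at least two extra'' required for a super cut vertex. Once the cut edge is established, the remainder is routine bookkeeping on connected components.
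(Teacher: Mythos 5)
Your proof is correct and follows essentially the same route as the paper's: both derive from item $(1)$ of Corollary \ref{delete-leaves} that the middle edge is a cut edge and then exhibit a middle vertex that is a super cut vertex adjacent to a leaf, contradicting item $(2)$; the only difference is that you dispose of the degenerate case by noting $G_1$ would coincide with the path (hence be a tree), whereas the paper invokes the standing assumption that the cover has at least two components to find an outside neighbour of $v_2$ or $v_3$ directly. One small quibble: ``$G_1$ is not a tree'' is not actually equivalent to ``its maximum path-cycle cover has more than one component'' (a non-tree with a Hamilton path breaks the equivalence), but since both are standing assumptions in Section \ref{algorithm}, this does not affect your argument.
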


\begin{proof}
 Let $p=$ $u_{1}$ $u_{2}$ $u_{3}$ $u_{4}$ be a path component of length three in a maximum path-cycle cover, and $u_{1}$, $u_{4}$ be leaves of $G_1$.   By Corollary \ref{delete-leaves}, $(u_{2},u_{3})$ is a cut edge of $G_1$. Since a maximum path-cycle cover of $G_1$ has at least two connected components, either  $u_{2}$ or $u_{3}$ must be adjacent to a vertex outside $p$ respecting an edge of $G_1$. Without loss of generality, let $u_{2}$ be  adjacent to a vertex outside $p$ respecting an edge of $G_1$. So the deletion of  $u_{2}$ from $G_1$ will yield at least three connected components because $u_{1}$ is a leaf and $(u_{2},u_{3})$ is a cut edge.  This comes to a contradiction to the assumption that $G_1$ is reduced. \qed
\end{proof}

By the following two lemmas, we  show that  a path component of length three has one endpoint adjacent to a vertex outside it respecting an edge of $G_1$, or can be transformed into one which has one endpoint adjacent to a vertex outside it respecting an edge of $G_1$, no matter whether that path component has an endpoint acting as a leaf of $G_1$ or not,

 \begin{lemma}
 \label{no-special-3-path}
In a maximum path-cycle cover of $G_1$, if a path component of length three has no endpoint as a leaf of $G_1$, then it must have an endpoint adjacent to a vertex outside it.
 \end{lemma}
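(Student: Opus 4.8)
The plan is to argue by contradiction: I assume that \emph{neither} endpoint $u_1$ nor $u_4$ of the length-three path component $p = u_1\,u_2\,u_3\,u_4$ is adjacent to a vertex outside $p$ respecting an edge of $G_1$, and from this I manufacture a path-cycle cover with strictly more edges than the given maximum one, contradicting maximality.

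First I would exploit the non-leaf hypothesis to force chords inside $\{u_1,u_2,u_3,u_4\}$. Since $u_1$ is not a leaf of $G_1$, its degree in $G_1$ is at least $2$; as its only path edge is $(u_1,u_2)$ and, by assumption, it has no neighbor outside $p$, the vertex $u_1$ must be adjacent in $G_1$ to $u_3$ or to $u_4$. Symmetrically, $u_4$ must be adjacent to $u_1$ or to $u_2$. Thus at least one of the potential chords $(u_1,u_3)$, $(u_1,u_4)$ is present, and at least one of $(u_1,u_4)$, $(u_2,u_4)$ is present.

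Next I would split on whether $(u_1,u_4)\in E(G_1)$. If it is, then adjoining $(u_1,u_4)$ to the path edges of $p$ turns $p$ into the $4$-cycle $u_1\,u_2\,u_3\,u_4\,u_1$. If $(u_1,u_4)\notin E(G_1)$, then the forced chords must instead be $(u_1,u_3)$ for $u_1$ and $(u_2,u_4)$ for $u_4$, and now the edges $(u_1,u_2)$, $(u_2,u_4)$, $(u_4,u_3)$, $(u_3,u_1)$ form the $4$-cycle $u_1\,u_2\,u_4\,u_3\,u_1$. Either way, the four vertices of $p$ carry a $4$-cycle of $G_1$.

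Finally I would replace $p$ by this $4$-cycle inside the maximum path-cycle cover. Because $u_1$ and $u_4$ are endpoints (degree $1$ in the cover) and $u_2$, $u_3$ are inner (degree $2$), none of the four vertices carries any \emph{cover} edge leaving $p$; hence the replacement alters no degree outside $\{u_1,u_2,u_3,u_4\}$ and keeps every degree at most $2$, so it yields a legal path-cycle cover. Its edge count exceeds the original by exactly one, since the $4$-cycle has $4$ edges while $p$ has $3$, contradicting maximality and so proving the lemma. The step I expect to be most delicate is the bookkeeping in the second case: one must verify that the rebuilt cycle $u_1\,u_2\,u_4\,u_3\,u_1$ reuses the two surviving path edges $(u_1,u_2)$ and $(u_3,u_4)$, discards only $(u_2,u_3)$, and adjoins the two chords $(u_1,u_3)$ and $(u_2,u_4)$, so that the net change is genuinely a gain of one edge with all degrees staying within the bound $2$.
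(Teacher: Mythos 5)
Your proof is correct and follows essentially the same route as the paper's: assuming neither endpoint of $p=u_1u_2u_3u_4$ has a neighbour outside $p$, the non-leaf hypothesis forces chords among $\{u_1,u_2,u_3,u_4\}$ that produce a $4$-cycle on $V(p)$, and exchanging that $4$-cycle for the $3$-edge path component contradicts the maximality of the path-cycle cover. You are in fact slightly more careful than the paper, which asserts that the forced second neighbours must lie in $\{u_2,u_3\}$ and thereby silently omits the case $(u_1,u_4)\in E(G_1)$; your Case A covers that possibility explicitly (it also yields a $4$-cycle, so the conclusion is unchanged).
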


  \begin{proof}
  Let  $p=$ $u_{1}$ $u_{2}$ $u_{3}$ $u_{4}$ be a path component of length three in a maximum path-cycle of  $G_1$, where $u_{1}$ and $u_{4}$ are endpoints of it rather than leaves of $G_1$. If neither $u_{1}$ nor $u_{4}$ is adjacent to any vertex outside $p$ respecting an edge of $G_1$, there must be two vertices $v$, $v^{'}$ $\in$ $\{u_{2}$, $u_{3}\}$  such that $(u_{1},v)\in E(G_1)$ and $(u_{4},v^{'})\in E(G_1)$.  This leads to a contradiction because,

  (1) $v$ $\neq$  $u_{2}$ and $v^{'}$ $\neq$  $u_{3}$,  otherwise, $G_1$ is not simple.

  (2) $v$ $=$ $u_{3}$ and $v^{'}$ $=$ $u_{2}$ can not happen simultaneously, because if so, $u_{1}$ $u_{3}$ $u_{4}$ $u_{2}$ will form a cycle component of length four, which contradicts to the assumption that  $p$  belongs to a maximum path-cycle cover.
\qed
  \end{proof}

\begin{lemma}
\label{3-path-with-one-endpoint-as-a-leaf}
In a maximum path-cycle cover of $G_1$, if a path component of length three has just one endpoint as a leaf of $G_1$, then respecting an edge of $G_1$, it has one endpoint adjacent to a vertex outside it, or can be transformed into one with one endpoint adjacent to a vertex outside it.
\end{lemma}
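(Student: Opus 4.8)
The plan is to fix notation $p = u_1 u_2 u_3 u_4$ for the length-three path component inside a maximum path-cycle cover $H$ of $G_1$, and to use the hypothesis that exactly one endpoint is a leaf of $G_1$; without loss of generality I take $u_1$ to be that leaf and $u_4$ to be the non-leaf endpoint. Since $u_1$ is a leaf, its only neighbor in $G_1$ is $u_2$, so $u_1$ can never be adjacent to a vertex outside $p$. Hence the only endpoint that could witness the conclusion for $p$ itself is $u_4$, and the whole argument splits naturally on whether $u_4$ reaches outside $p$.

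First I would dispatch the easy case: if $u_4$ is adjacent to some vertex outside $p$ respecting an edge of $G_1$, then $p$ itself already has an endpoint adjacent to a vertex outside it, and there is nothing to prove. So the substance lies in the case where $u_4$ has no neighbor outside $p$. Because $u_4$ is not a leaf, it has degree at least two, so besides its path-neighbor $u_3$ it must have a further neighbor among $\{u_1, u_2\}$. The edge $(u_1, u_4)$ is excluded, since $u_1$ is a leaf whose sole neighbor is $u_2$; therefore $(u_2, u_4) \in E(G_1)$. This chord is exactly what lets me reroute $p$ into the alternative length-three path $q = u_1 u_2 u_4 u_3$ on the same vertex set, which I may substitute for $p$ without changing the number of edges, so $H \setminus \{p\} \cup \{q\}$ remains a maximum path-cycle cover. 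If the endpoint $u_3$ of $q$ is adjacent to a vertex outside $p$, then $q$ is the desired transformed component and I am done.

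The only remaining possibility is that none of $u_1$, $u_3$, $u_4$ reaches outside $p$, and this is where a contradiction must be forced. In this situation I would pin down the neighborhoods inside $p$: using that $u_1$ is a leaf and that none of $u_1, u_3, u_4$ has a neighbor outside $p$, one obtains $N(u_3) = \{u_2, u_4\}$ and $N(u_4) = \{u_2, u_3\}$, so every edge leaving the set $\{u_1, u_2, u_3, u_4\}$ must be incident to $u_2$. Since $G_1$ is connected and its maximum path-cycle cover has at least two components, $p$ must attach to the rest of the graph, and by the above this forces $u_2$ to have a neighbor $w$ outside $p$. Deleting $u_2$ then isolates $u_1$, splits off $\{u_3, u_4\}$ as its own component, and leaves the part containing $w$ --- at least three components in all --- so $u_2$ is a super cut vertex adjacent to the leaf $u_1$, contradicting item $(2)$ of Corollary \ref{delete-leaves}.

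I expect the bookkeeping in this last case to be the main obstacle: one must argue carefully that, once $u_3$ and $u_4$ are barred from reaching outside $p$, all of their edges are trapped inside the four-vertex set so that $u_2$ becomes the unique link to the rest of $G_1$, and then verify that removing $u_2$ genuinely produces three components rather than two before invoking the reduced-graph property. Everything else reduces to a short case split together with the rerouting trick that replaces $p$ by $q$, made available by the chord $(u_2, u_4)$.
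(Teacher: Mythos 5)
Your proof is correct and follows essentially the same route as the paper's: up to reversing the labeling of $p$, you find the same chord (the paper's $(u_1,u_3)$ is your $(u_2,u_4)$), perform the same rerouting into a path whose new non-leaf endpoint is the former inner vertex, and derive the same contradiction by exhibiting a super cut vertex adjacent to a leaf, violating Corollary~\ref{delete-leaves}. The only difference is presentational: you establish the chord and rerouting first and treat the super-cut-vertex contradiction as the residual case, whereas the paper splits on which inner vertex reaches outside $p$ before rerouting.
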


\begin{proof}
Let $p=u_1$ $u_2$ $u_3$ $u_4$ be a path component of length three in a maximum path-cycle cover of $G_1$. Without loss of generality, let $u_4$ be a leaf of $G_1$.  If $u_1$ is adjacent to a vertex outside $p$ respecting an edge of $G_1$,  the proof is done. Otherwise, $u_1$ must be adjacent to $u_3$ respecting an edge of $G_1$ because $u_1$ is not a leaf of $G_1$ and $G_1$ is simple. Since $p$ is not the unique component in the maximum path-cycle cover, $u_2$ or $u_3$ must be adjacent to a vertex outside $p$ respecting an edge of $G_1$.

(1) If $u_3$ is adjacent to a vertex outside $p$ while $u_2$ is not, then deleting $u_3$ from $G_1$ will yield at least three connected components. Thus, $u_3$ is a super cut vertex of $G_1$ and adjacent to a leaf, which means $G_1$ is not reduced, a contradiction. That is, $u_3$ cannot be adjacent to any vertex outside $p$.

(2) If $u_2$ is adjacent to a vertex outside $p$, then $p^{'}=u_2$ $u_1$ $u_3$ $u_4$ is a path component of length three with $V(p')$ $=$ $V(p)$. Replacing $p$ with $p^{'}$ in the maximum path-cycle cover, $p$ is transformed into a path component of length three with one end point adjacent to a vertex outside it respecting an edge of $G_1$.
\qed
\end{proof}

Summing up the reconstructions for a maximum path-cycle cover,  we have,
\begin{lemma}
\label{path-of-length-three-property}
There is such a maximum path-cycle cover of $G_1$ that every path component of length at most $3$ must have an endpoint adjacent to a vertex outside it respecting an edge of $G_1$.
\end{lemma}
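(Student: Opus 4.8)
The plan is to assemble Lemma~\ref{path-of-length-three-property} from the component-specific results already established, arranging them so that the transformation applied to one path component never spoils the property on any other. First I would start from a maximum path-cycle cover $H$ that already satisfies Lemma~\ref{path-of-length-no-larger-than-2-property}, so that every path component of length at most $2$ has an endpoint adjacent to a vertex outside it respecting an edge of $G_1$. It then remains only to treat the path components of length exactly $3$.

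Next, for each length-$3$ path component $p$ I would split into cases according to how many of its two endpoints are leaves of $G_1$. By Lemma~\ref{no-K1-3-path} both endpoints cannot be leaves, so only two cases survive: if neither endpoint is a leaf, Lemma~\ref{no-special-3-path} already guarantees an endpoint of $p$ adjacent to a vertex outside $p$, and no change is needed; if exactly one endpoint is a leaf, Lemma~\ref{3-path-with-one-endpoint-as-a-leaf} either yields such an endpoint directly or produces a replacement path component $p'$ with $V(p')=V(p)$ and the same length that does. Applying this to every length-$3$ component would yield a cover in which every path component of length at most $3$ meets the required condition, and maximality is retained throughout because none of these replacements changes the number of edges.

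The step I expect to carry the real content is arguing that these transformations are mutually compatible, that is, that fixing the length-$3$ components does not destroy the length-$\le 2$ property and vice versa. The key observation I would use is that every transformation invoked, both the length-$2$ swap of Lemma~\ref{path-of-length-no-larger-than-2-property} and the length-$3$ swap of Lemma~\ref{3-path-with-one-endpoint-as-a-leaf}, only permutes the vertices inside a single path component, leaving that component's vertex set unchanged and leaving every other component entirely untouched. Since the property ``having an endpoint adjacent to a vertex outside it respecting an edge of $G_1$'' depends only on the vertex set of the component together with the choice of which of its vertices are endpoints, rearranging one component alters neither its own vertex set nor any other component's vertex set, and changes only the endpoints of the component being rearranged. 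Consequently the components can be processed one at a time, first the ones of length at most $2$ and then those of length $3$, with each step preserving both the maximality of the cover and the property already achieved on the components handled earlier. This compatibility argument is exactly what lets the four preceding lemmas be combined into the single guarantee asserted by Lemma~\ref{path-of-length-three-property}.
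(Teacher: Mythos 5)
Your proposal is correct and follows essentially the same route as the paper, whose proof simply invokes Lemmas \ref{path-of-length-no-larger-than-2-property}, \ref{no-K1-3-path}, \ref{no-special-3-path} and \ref{3-path-with-one-endpoint-as-a-leaf} in the same order. Your explicit observation that each swap only permutes vertices within a single component, so that fixing one component cannot invalidate the property already secured for another, is left implicit in the paper and is a worthwhile addition.
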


\begin{proof}
Let $H$ be a maximum path-cycle cover of $G_1$. By  Lemma \ref{path-of-length-no-larger-than-2-property}, every path component of length at most 2 in $H$ has one endpoint adjacent to a vertex outside it or can be transformed into one  which  has one endpoint adjacent to a vertex outside it respecting an edge of $G_1$.

By  Lemma \ref{no-K1-3-path}, \ref{no-special-3-path} and \ref{3-path-with-one-endpoint-as-a-leaf},  every path component of length three in $H$ has one endpoint adjacent to a vertex outside it, or can be transformed into one which has one endpoint adjacent to a vertex outside it respecting an edge of $G_1$. That is all for the proof. \qed
\end{proof}

If a path component has one endpoint adjacent to  a vertex of a cycle component respecting an edge of $G_{1}$, they  can be merged  into one path component. Thus,

\begin{lemma}
\label{path-cycle-merge}
There exists a maximum path-cycle cover of $G_{1}$ in which no endpoint of a path component is adjacent to a vertex of any cycle component respecting an edge of $G_{1}$.
\end{lemma}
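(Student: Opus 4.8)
The plan is to prove existence constructively, by repeatedly applying the merge operation sketched just before the statement and using the number of cycle components as a strictly decreasing termination measure. I would start from any maximum path-cycle cover $H$ of $G_1$ (for instance one furnished by Lemma \ref{path-of-length-three-property}). If no endpoint of a path component of $H$ is adjacent to a vertex of a cycle component respecting an edge of $G_1$, the cover already witnesses the claim. Otherwise, fix such an endpoint $u$ of a path component $p$ together with a vertex $w$ of a cycle component $C$ with $(u,w)\in E(G_1)$, and apply the merge described next.

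First I would make the merge precise and check that it again yields a path-cycle cover. In $C$, the vertex $w$ is incident with exactly two edges; delete one of them, so that $C$ turns into a path component $C'$ having $w$ as an endpoint. Then add the edge $(u,w)$ of $G_1$, joining the endpoint $u$ of $p$ to the endpoint $w$ of $C'$. After this change $w$ has degree $2$ (one edge of $C'$ together with $(u,w)$), $u$ has degree $2$ if $p$ had an edge at $u$ and degree $1$ if $p$ was a singleton, and every other vertex keeps its former degree; hence no vertex is incident with more than two edges and the spanning subgraph is again a path-cycle cover. The former components $p$ and $C$ have been fused into a single path component $q$ with $V(q)=V(p)\cup V(C)$, while all remaining components stay untouched.

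Next I would verify that maximality is preserved, which is the crux of the argument. If $C$ has $k$ edges and $p$ has $\ell$ edges, then before the operation $p$ and $C$ together contribute $k+\ell$ edges; afterwards $C'$ contributes $k-1$ edges, the new edge $(u,w)$ contributes one, and $p$ still contributes $\ell$, for a total of $(k-1)+1+\ell=k+\ell$. Thus the total number of edges of the cover is unchanged, so the result is again a maximum path-cycle cover, while the number of cycle components has dropped by exactly one, since one cycle became a path and no new cycle was created.

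Finally, because each merge strictly decreases the finite, nonnegative number of cycle components, the process terminates after finitely many steps. The cover reached at termination is still maximum and admits no endpoint of a path component adjacent to a vertex of a cycle component respecting an edge of $G_1$, which is exactly the assertion. I expect the only delicate point to be the maximality bookkeeping: one must open the cycle at precisely the vertex $w$ that receives the new edge, so that exactly one edge is removed and exactly one added; opening it elsewhere could change the edge count and break the invariant.
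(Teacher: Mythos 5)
Your proof is correct and follows essentially the same route as the paper: merge each offending path component with the adjacent cycle by adding the edge $(u,w)$ and deleting one cycle edge incident with $w$, preserving the edge count and hence maximality, and repeat until no such pair remains. The paper's version is terser; your added bookkeeping on degrees, edge counts, and the decreasing number of cycle components as a termination measure is exactly what makes the paper's one-line argument rigorous.
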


\begin{proof}
In a maximum path-cycle cover, let $p$ be a path component which has an endpoint, say $u$, adjacent to a vertex, say $v$, of a cycle component, say $c$, respecting an edge of $G_1$. Then adding  $(u,v)$ and removing an edge incident to $v$ of $c$ will merge $p$ and $c$ into  a path component of length $|E(p)|$ + $|E(c)|$. This can be done for every path component with one endpoint adjacent to a vertex of a cycle component respecting an edge of $G_1$, which must result in a  maximum path-cycle cover as what the lemma states. \qed
\end{proof}

Recall that we look toward arriving at a tree to which each path component can contribute as many internal vertices as the edges it has, if its length is no more than three. To meet this aim, we will  reconstruct the maximum path-cycle cover into one, in which each path component of length 1, 2 or 3 has one endpoint adjacent to an inner vertex of a path component of length at least four.  We have to  deal  with those path components of length 1 beforehand.

\begin{lemma}
\label{property-of-path-less-than-four-1}
There is such a maximum path-cycle cover of $G_1$ that, respecting an edge of $G_1$, each path component of length $1$ has an endpoint adjacent to an inner vertex of a path component of length at least $3$.
\end{lemma}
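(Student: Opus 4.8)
The plan is to start from a maximum path-cycle cover $H$ of $G_1$ that already satisfies the conclusion of Lemma \ref{path-cycle-merge} (no endpoint of a path component is adjacent to a vertex of a cycle component), and then to repeatedly repair the ``bad'' length-$1$ path components until none remains, using the number of length-$1$ path components as a strictly decreasing potential.

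First I would analyze the possible outside neighbors of a length-$1$ path component $p = u_1\,u_2$. Since $p$ has length $1$, both $u_1$ and $u_2$ are endpoints of $p$, so any edge of $G_1$ joining a vertex of $p$ to a vertex outside $p$ leaves $p$ at an endpoint. Because $H$ satisfies Lemma \ref{path-cycle-merge}, such a neighbor cannot lie on a cycle component. Moreover, by the maximality of $H$, such a neighbor cannot be an endpoint of another path component $q$: adding the joining edge between the two endpoints would merge $p$ and $q$ into a single path with one more edge, contradicting maximality. Hence every outside neighbor of $u_1$ or $u_2$ must be an inner vertex of some path component, which forces that component to have length at least $2$. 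Since $G_1$ is connected and $H$ is assumed to have more than one component, $p$ does have at least one such neighbor.

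Next I would split into cases. If some outside neighbor of $p$ is an inner vertex of a path component of length at least $3$, then $p$ already meets the required condition. Otherwise every outside neighbor of $p$ is the unique inner vertex $w$ of a length-$2$ path component $q = a\,w\,b$; fix one such neighbor, say with $(u_1,w)\in E(G_1)$. The key swap is to delete the edge $(w,a)$ from $q$ and insert the edge $(u_1,w)$: this replaces $p$ and $q$ by the singleton $a$ together with the length-$3$ path $u_2\,u_1\,w\,b$. I would verify that this keeps the total edge count unchanged (so the cover remains maximum), respects the degree bound (so it is a genuine path-cycle cover), introduces no path endpoint adjacent to a cycle vertex (so Lemma \ref{path-cycle-merge} is preserved for the next iteration), and destroys the length-$1$ component $p$ while adding only a singleton and a length-$3$ path, hence creating no new length-$1$ component.

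Finally I would run this repair as long as a bad length-$1$ component exists. Each application strictly decreases the number of length-$1$ path components, and it never turns a good length-$1$ component into a bad one, since an untouched length-$1$ component can only acquire a new length-$3$ neighbor through $w$ and cannot lose one. Because the number of length-$1$ components is a nonnegative integer, the process terminates, and at termination no bad length-$1$ component remains, giving the desired maximum path-cycle cover. The main obstacle I expect is the safety check of the swap: confirming via maximality that endpoint-to-endpoint adjacencies are genuinely impossible, so that the only obstruction really is the length-$2$ case, and checking that the swap simultaneously preserves maximality and the prerequisite property of Lemma \ref{path-cycle-merge} without reintroducing length-$1$ components.
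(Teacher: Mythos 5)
Your proof is correct and uses essentially the same local swap as the paper: having ruled out cycle components (via Lemma \ref{path-cycle-merge}) and endpoints of other paths (via maximality), you absorb a bad length-$1$ component into the length-$2$ component through its inner vertex, spin off a singleton, and iterate with the number of length-$1$ components as a decreasing potential. The only difference is that the paper also chooses \emph{which} endpoint of the length-$2$ path to detach (keeping the one guaranteed by Lemma \ref{path-of-length-three-property} to have an outside neighbor) so that the invariant of Lemma \ref{path-of-length-three-property} is preserved for the later Lemma \ref{property-of-path-less-than-four}; your version proves the statement as written but does not track that extra invariant.
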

\begin{proof}
Let $H$ be a maximum path-cycle cover of $G_1$ which subjects to  Lemma \ref{path-of-length-three-property} and  \ref{path-cycle-merge}.
If respecting an edge of $G_1$, one endpoint of a path component of length 1 is adjacent to an inner vertex of another path component of length 2,  then these two path components  can be replaced by a singleton and another path component of length  three which has one endpoint adjacent to a vertex outside it. Concretely, let $p$ $=$ $u_{1}$ $u_{2}$ be a path component of length 1, $q=$ $v_{1}$ $v_{2}$ $v_{3}$ be a path component of length 2. Without loss of generality, let $u_1$ be adjacent to $v_2$ respecting an edge of $G_1$. By Lemma \ref{path-of-length-three-property}, let  $v_3$ be adjacent to a vertex other than $v_1$, $v_2$. Moreover, $v_3$ cannot be adjacent to $u_1$ or $u_2$, because if so, $H$ cannot be  maximum.   Then in $H$, $p$ and $q$ can be replaced by $p'$ = $v_{3}$ $v_{2}$ $u_{1}$ $u_{2}$ and $q^{'}$ = $v_1$ with  $|E(p)|$ $+$ $|E(q)|$ $=$ $|E(p^{'})|$ $+$ $|E(q^{'})|$. Since $v_3$ is adjacent to another vertex than $u_1$, $u_2$, $v_1$, $v_2$, $H$ $\setminus$ $\{p,q\}$ $\cup$ $\{p^{'}$, $q^{'}\}$ must be also a maximum path-cycle cover of $G_1$ which subjects to Lemma \ref{path-of-length-three-property} and  \ref{path-cycle-merge}. Repeating this operation  if respecting an edge of $G_1$, a path component of length 1 has one endpoint adjacent to an inner vertex of a path component of length 2,  will transform $H$ into a maximum path-cycle cover of $G_1$ as stated in this lemma.
\qed
\end{proof}

\begin{lemma}
\label{property-of-path-less-than-four}
There is such a maximum path-cycle cover of $G_1$ that, respecting an edge of $G_1$, $(1)$ every singleton is adjacent to an inner vertex of a path component; every path component of length $1$, $2$ or $3$ has an endpoint adjacent to an inner vertex of a path component of length at least four.
\end{lemma}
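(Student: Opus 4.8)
The plan is to start from a maximum path-cycle cover $H$ of $G_1$ that already satisfies Lemmas~\ref{path-of-length-three-property}, \ref{path-cycle-merge}, and \ref{property-of-path-less-than-four-1}, and to apply a single kind of exchange repeatedly until both required properties hold. The first property, that every singleton is adjacent to an inner vertex of a path component, I expect to get essentially for free. If $u$ is the vertex of a singleton, then by Lemma~\ref{path-of-length-three-property} it is adjacent to some vertex $w$ outside the singleton, by Lemma~\ref{path-cycle-merge} this $w$ lies in a path component $q$, and if $w$ were an endpoint of $q$ then adding $(u,w)$ would yield a path-cycle cover with one more edge, contradicting the maximality of $H$. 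Hence $w$ is inner, so the first property holds for every cover in the class we work with, and in particular is preserved at every stage.

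For the second property I would introduce a \emph{merge}. Call a path component $p$ of length $1$, $2$, or $3$ a \emph{violation} if none of its endpoints is adjacent to an inner vertex of a path component of length at least four. For such a $p$, Lemma~\ref{path-of-length-three-property} (together with Lemma~\ref{property-of-path-less-than-four-1} when $p$ has length $1$) provides an endpoint $u$ of $p$ adjacent to a vertex $w$ outside $p$; Lemma~\ref{path-cycle-merge} places $w$ in a path component $q$, and the maximality argument above again forces $w$ to be inner, so $q$ has length at least $2$. Since $p$ is a violation, $q$ cannot have length at least four, so $|E(q)|\in\{2,3\}$, and when $p$ has length $1$ Lemma~\ref{property-of-path-less-than-four-1} pins $q$ to length exactly $3$. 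Now add $(u,w)$ and delete the edge of $q$ incident to $w$ whose removal isolates an endpoint of $q$ as a singleton. This preserves the edge count, so the result $H'$ is again a maximum path-cycle cover, and it replaces $p$ and $q$ by one path of length $|E(p)|+|E(q)|\ge 4$ together with one new singleton.

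The crux, and the step I expect to be the main obstacle, is to show that this exchange both stays inside the working class and strictly decreases a measure, so that iteration terminates at a cover satisfying the second property. I would check directly that $H'$ still satisfies Lemmas~\ref{path-of-length-three-property}, \ref{path-cycle-merge}, and \ref{property-of-path-less-than-four-1}: the only new short component is the singleton, whose vertex is automatically adjacent to something outside it and, having been an endpoint of $q$ in $H$, is not adjacent to any cycle vertex by Lemma~\ref{path-cycle-merge} applied to $H$; the two endpoints of the new long path were endpoints of $p$ or $q$ in $H$; and the inner vertices of $q$ remain inner in the new long path of length at least four, so no length-$1$ witness for Lemma~\ref{property-of-path-less-than-four-1} is destroyed. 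For termination I would use the number of non-singleton path components as the measure: the exchange converts the two non-singleton components $p$ and $q$ into one non-singleton long path plus a singleton, lowering this count by exactly one, while creating no new length-$1$, $2$, or $3$ violation (only a harmless singleton appears) and spoiling the second property for no untouched short path, since those depend only on long components that the exchange leaves untouched. The delicate bookkeeping is to confirm, across the cases $|E(q)|=2$ and $|E(q)|=3$ and for each position of $w$, that deleting the correct $q$-edge keeps all three earlier lemmas intact while the measure strictly drops; once this is established the iteration must halt, and a halting cover has no violation, which is exactly the second property.
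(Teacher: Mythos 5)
Your proposal is correct and follows essentially the same route as the paper: part (1) is the identical maximality/Lemma~\ref{path-cycle-merge} argument, and part (2) is the same merge that replaces a short path $p$ and the path $q$ containing the inner neighbour $w$ by a singleton plus a path of length $|E(p)|+|E(q)|\ge 4$, iterated until no violation remains. Your extra bookkeeping (checking that the exchange preserves the earlier lemmas and strictly decreases the number of non-singleton path components) is a slightly more explicit version of the paper's terser termination claim, not a different method.
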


\begin{proof}
Let $H$ be a maximum path-cycle cover of $G_1$ which subjects to  Lemma \ref{path-of-length-three-property},  \ref{path-cycle-merge} and \ref{property-of-path-less-than-four-1}.

$(1)$A singleton cannot be adjacent to an endpoint of any path component in $H$ respecting an edge of $G_1$, because $H$ is  maximum. A singleton cannot be adjacent to any vertex of a cycle component in $H$ respecting an edge of $G_1$ by Lemma \ref{path-cycle-merge}. Thus, a singleton must be adjacent to an inner vertex of a path component respecting an edge of $G_1$.

$(2)$If in $H$, every path component of length 1, 2 or 3 has an endpoint adjacent to an inner vertex of a path component of length at least four respecting an edge of $G_1$, the proof is done. Otherwise,   we can transform $H$ into such a maximum path-cycle cover as what the lemma states. Let $p$  be a path component of length 1, 2 or 3, while $q$ be a path component of length 2 or 3, such that one endpoint of $p$ is adjacent to an inner vertex of  $q$.
By Lemma \ref{property-of-path-less-than-four-1},  $p$ and $q$ have at least four edges. Thus it suffices to show that $p$ and $q$ can always be replaced by  a singleton and a path component of length $|E(p)|$ $+$ $|E(q)|$.

Let $p=$ $u_{1}$ $x_1$ $x_2$ $u_{2}$, $q$ = $v_1$ $v_2$ $y$ $v_3$,  where $x_1$, $x_2$, $y$ may be nonexistent.  Let $(u_{1}, v_{2})$ be an edge of $G_1$.  Then we can replace $p$ and $q$ by $p'$ = $v_{3}$ $y$ $v_{2}$ $u_{1}$ $x$ $u_{2}$ and $q^{'}=$ $v_{1}$, where  $|E(p)|$ $+$ $|E(q)|$ $=$ $|E(p^{'})|$ $+$ $|E(q^{'})|$. Thus, $H$ $\setminus$ $\{p,q\}$ $\cup$ $\{p^{'}$, $q^{'}\}$ is also a maximum path-cycle cover of $G_1$. Since $|E(p)|$ + $|E(q)|$ $\geq$ $4$, this replacement must eliminate two path components of length 1, 2 or 3 in $H$. We insist to denote by  $H$  the maximum path-cycle cover resulted by replacing $\{p,q\}$ with $\{p',q'\}$ in $H$. Then by Lemma \ref{path-cycle-merge}, repeating this replacement in $H$ if a path component of length 1, 2 or 3 has one endpoint adjacent to an inner vertex of length 2 or 3, will transform $H$ into a maximum path-cycle cover, in which  each path component of length 1, 2 or 3 in $H'$ has one endpoint adjacent to an inner vertex of a path component of length at least 4 resecting an edge of $G_1$.
 \qed
\end{proof}

In the next subsection, we start with a maximum path-cycle cover which subjects to Lemma \ref{property-of-path-less-than-four}, to assemble the connected components in it into a spanning tree.

\subsection{Assemble of a spanning tree}

In this subsection,
let $H$ be a  maximum path-cycle cover of  $G_{1}$  which subjects to Lemma \ref{property-of-path-less-than-four}. Let   $T$ be a subtree of $G_1$. A path component, say  $p$ in $H$,
\emph{joins}  $T$, if   $V(p)$ $\subseteq$ $V(T)$ and $E(p)$ $\subseteq$ $E(T)$. A cycle component, say $c$ in $H$,  \emph{joins}  $T$, if  $V(c)$ $\subseteq$ $V(T)$ and $|E(c)\cap{E(T)}|$ $\geq$ $|E(c)|-1$.  We specially pay attention to
those subtrees of $G_1$ which are joined by at least one connected component in $H$. A connected component in
 $H$ \emph{joins} a sub-forest of $G_1$, if it joins a tree in this sub-forest. A subtree
 of $G_1$ is $\alpha$-\emph{approximate} $(0\leq \alpha\leq 1)$, if it has at least $\alpha$ times as many internal vertices as  the edges those connected components which join it have. A sub-forest of $G_1$ is
 $\alpha$-approximate $(0\leq\alpha\leq 1)$, if all  trees in it are $\alpha$-approximate.



To construct a $\frac{3}{4}$-approximate spanning tree of $G_1$, we first assemble those connected components in $H$ into a $\frac{3}{4}$-approximate sub-forest of $G_1$.
A path component of length at least four in $H$ is a $\frac{3}{4}$-approximate subtree of $G_1$ naturally. Thus,

\begin{lemma}\label{lemma-long-path-added}
There is  a $\frac{3}{4}$-approximate sub-forest of $G_1$, such that every path component of length at least four in $H$  joins one tree of it.
\end{lemma}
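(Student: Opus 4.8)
The plan is to build the required sub-forest by simply taking each path component of length at least four in $H$ as a tree in its own right, and then to check that each such tree already satisfies the $\frac{3}{4}$-approximate condition on its own. The whole lemma reduces to an elementary vertex count once the set of components joining such a tree is pinned down.

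First I would fix a path component $p$ of length $k \geq 4$ in $H$ and regard the subtree $T = p$ of $G_1$, so that $V(T) = V(p)$ and $E(T) = E(p)$. The first step is to determine exactly which connected components of $H$ join $T$. Since the connected components of a path-cycle cover are pairwise vertex-disjoint, any component $p' \neq p$ of $H$ satisfies $V(p') \cap V(p) = \emptyset$, so the requirement $V(p') \subseteq V(T)$ fails; hence $p$ is the unique component of $H$ that joins $T$. Consequently the total number of edges of the components joining $T$ is exactly $|E(p)| = k$.

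Next I would count the internal vertices of $T$. As a path with $k$ edges, $T$ has $k+1$ vertices, of which the two endpoints are leaves and the remaining $k-1$ are internal. The $\frac{3}{4}$-approximate inequality then becomes $k-1 \geq \frac{3}{4}k$, i.e. $k \geq 4$, which holds by hypothesis. Thus every path component of length at least four, taken by itself, is a $\frac{3}{4}$-approximate subtree of $G_1$. Collecting all of these into a single sub-forest $F$, each tree of $F$ is $\frac{3}{4}$-approximate, so $F$ is $\frac{3}{4}$-approximate by definition, and by construction every path component of length at least four in $H$ joins a tree of $F$, namely itself.

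The only point demanding any care—rather than being a genuine obstacle—is the observation that the vertex-disjointness of the components of $H$ forces $p$ to be the sole component joining $T = p$. Without it the count of joining edges could exceed $k$ and the ratio $\frac{k-1}{k}$ argument would no longer be the right comparison; with it, the remainder is a routine counting step.
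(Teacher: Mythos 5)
Your proof is correct and follows essentially the same route as the paper, which simply declares the set of path components of length at least four to be the desired forest after noting that each such component is $\frac{3}{4}$-approximate on its own. Your added details — the vertex-disjointness argument pinning down $p$ as the unique joining component, and the count $k-1 \geq \frac{3}{4}k$ for $k \geq 4$ — are exactly the (unstated) justification the paper relies on.
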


\begin{proof}
Let $F$ be the set of path components of length at least four in $H$. Then $F$ is such a forest as  the lemma states. \qed
\end{proof}

Those singletons and path components of length 1, 2 or 3 in $H$, if present, can  be  assembled together with the path components of length at least 4 respectively, thus into a $\frac{3}{4}$-approximate sub-forest with more vertices than those of the forest  Lemma \ref{lemma-long-path-added} states.

\begin{lemma}
\label{short-path-contribution}
There is  a $\frac{3}{4}$-approximate sub-forest of $G_1$, such that every path component in $H$  joins one tree of it.
\end{lemma}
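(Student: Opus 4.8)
The plan is to grow the $\frac{3}{4}$-approximate sub-forest $F$ of Lemma \ref{lemma-long-path-added}, whose trees are precisely the path components of length at least four, by hanging every remaining path component onto it. There are two kinds of remaining components, the path components of length $1$, $2$, $3$ and the singletons, and I would attach them in that order using the adjacencies guaranteed by Lemma \ref{property-of-path-less-than-four}. First I would process the short path components: by Lemma \ref{property-of-path-less-than-four}$(2)$, each path component $p$ of length $m \in \{1,2,3\}$ has an endpoint $u$ adjacent, respecting an edge of $G_1$, to an inner vertex $w$ of some path component $P$ of length $\ell \geq 4$, so I add the edge $(u,w)$ to merge $p$ into the tree of $F$ containing $P$. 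Since the short components are pairwise vertex-disjoint and each is joined to a long tree by a single new edge, no cycle appears and the result stays a forest in which every long and every short path component joins a tree. Then I would process the singletons: by Lemma \ref{property-of-path-less-than-four}$(1)$ each singleton $s$ is adjacent to an inner vertex $w'$ of some path component, and since after the first step every path component already lies in a tree with its inner vertices intact, adding $(s,w')$ hangs $s$ off that tree as a leaf.

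The heart of the argument is verifying that every tree $T$ produced this way is still $\frac{3}{4}$-approximate. Consider the tree built around a long path $P$ of length $\ell \geq 4$, to which short components $p_1,\dots,p_k$ of lengths $m_1,\dots,m_k$ and some singletons have been attached. The components of $H$ that join $T$ are $P$, the $p_i$ and the singletons, so the denominator is $\ell + S$ edges, where $S = \sum_i m_i$ and singletons add no edges. For the numerator, all $\ell-1$ inner vertices of $P$ stay internal; for each $p_i$ the endpoint glued to $P$ acquires degree two and becomes internal while the $m_i - 1$ inner vertices of $p_i$ remain internal, so $p_i$ contributes exactly $m_i$ internal vertices; each singleton is a leaf contributing none, and it never demotes the inner vertex it hangs from. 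Hence $T$ has $(\ell - 1) + S$ internal vertices against $\ell + S$ joining edges, giving a ratio $1 - \frac{1}{\ell + S} \geq 1 - \frac{1}{4} = \frac{3}{4}$ because $\ell \geq 4$.

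I expect the main obstacle to be exactly this accounting. One must confirm that the gluing edges, which are edges of $G_1$ but not of any component of $H$, never enter the denominator, and that piling arbitrarily many short paths and singletons onto one long path can only improve the ratio rather than damage it; the monotone form $1 - \frac{1}{\ell+S}$ makes the latter transparent. The remaining care is bookkeeping: checking that the attachment targets stay internal throughout, which holds because an inner vertex of a path remains internal under edge additions and Lemma \ref{property-of-path-less-than-four}$(1)$ and $(2)$ only ever point at inner vertices of sufficiently long paths, and checking that the construction stays acyclic, so that $F$ indeed grows into a single $\frac{3}{4}$-approximate sub-forest joined by every path component of $H$, as claimed.
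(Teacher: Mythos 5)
Your proposal is correct and follows essentially the same route as the paper: grow the forest of length-$\ge 4$ path components by hanging each length-$1,2,3$ path component off an inner vertex of a long path via one endpoint (Lemma \ref{property-of-path-less-than-four}(2)), then hang each singleton off an internal vertex (Lemma \ref{property-of-path-less-than-four}(1)), and check that the ratio survives. Your explicit count $\frac{(\ell-1)+S}{\ell+S}=1-\frac{1}{\ell+S}\ge\frac{3}{4}$ is just a more concrete form of the paper's incremental bound $|E(H[V(p)])|+\frac{3}{4}|E(H[V(q)])|$ over $|E(H[V(p)])|+|E(H[V(q)])|$, so nothing is missing.
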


\begin{proof}
By Lemma \ref{lemma-long-path-added}, let $F$ be the $\frac{3}{4}$-approximation sub-forest formed by the path components of length at least 4 in $H$. Note that for an arbitrary subtree, say $T$ of $G_1$,  $E(H[V(T)])$ represents  the set of edges  the connected components which join $T$ have.
If $p$ is a  path component of length 1, 2 or 3 in $H$, then by Lemma \ref{property-of-path-less-than-four}, there is an edge, say $e$, in $G_1$ which is incident with  an endpoint of $p$ and a vertex of a tree, say $q$ in $F$. Adding  $e$ between  $p$ and $q$  will merge $p$ and $q$ into a new tree with the vertex set $V(p)$ $\cup$ $V(q)$. This tree  must be $\frac{3}{4}$-approximate, because it has at least $|E(H[V(p)])|$ $+$ $\frac{3}{4}|E(H[V(q)])|$ internal vertices, while the  connected components joining it by all, have  $|E(H[V(p)])|$ $+$ $|E(H[V(q)])|$ edges. By this method, every  path component of length 1, 2 or 3 can be made to join one $\frac{3}{4}$-approximate subtree. This must give rise to a $\frac{3}{4}$-approximate sub-forest of $G_1$ joined by all  path components but singletons, which will be denoted as $F$ insistently.

If $p$ is a singleton in $H$, by Lemma \ref{property-of-path-less-than-four}, there is an edge, say $e$, in $G_1$ which is incident with   $p$ and an internal vertex of a tree, say $q$ in $F$. Then $p$ and $q$ can be merged into a new tree by adding $e$ between them. With the vertex set $V(p)$ $\cup$ $V(q)$, this tree is $\frac{3}{4}$-approximate, because it has at least  $\frac{3}{4}|E(H[V(q)])|$ internal vertices, while the connected components joining it by all, have $|E(H[V(q)])|$ edges. By this way, every singleton can be made to join  one $\frac{3}{4}$-approximate subtrees. This must give rise to a $\frac{3}{4}$-approximate sub-forest of $G_1$ joined by all   path components.
\qed
\end{proof}

The remainder is  to construct a  $\frac{3}{4}$-approximate forest such that all connected components in $H$  join it.


\begin{lemma}
\label{cycle-contribution}
There is a $\frac{3}{4}$-approximate spanning forest of $G_1$, such that every connected component in $H$ joins one tree of it.
\end{lemma}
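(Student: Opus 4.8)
The plan is to start from the $\frac34$-approximate sub-forest $F$ delivered by Lemma \ref{short-path-contribution}, in which every path component of $H$ already joins some tree, and then to absorb the cycle components of $H$ one at a time while maintaining the invariant that every tree of the current forest is $\frac34$-approximate. The elementary move I would use is this: to attach a cycle $c$ with $k=|E(c)|\ge 4$ edges to a vertex $x$ already in the current forest, take an edge $(w,x)$ of $G_1$ with $w\in V(c)$, add it, and delete one of the two edges of $c$ incident with $w$. This opens $c$ into a path whose endpoint $w$ is glued to $x$; exactly $k-1$ of the $k$ edges of $c$ survive (so $c$ joins the new tree), and exactly $k-1$ of the vertices of $c$ become internal. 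Since we only delete an edge of the freshly added $c$ and never an edge of the existing tree, no previously internal vertex is lost, so the tree gains at least $k-1\ge \frac34 k$ internal vertices against the $k$ edges $c$ contributes to $E(H[V(\cdot)])$. Hence the move preserves $\frac34$-approximation (the value of $x$, whether a former leaf or already internal, only helps). Note that Lemma \ref{path-cycle-merge} guarantees the attaching edge lands on an inner vertex of a path or on a cycle vertex rather than a path endpoint, but the move works regardless.

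To see that all cycles can be absorbed I would argue in the component-adjacency graph $\mathcal A$ of $H$, whose nodes are the components of $H$ and whose edges record $G_1$-adjacencies between distinct components; $\mathcal A$ is connected because $G_1$ is. When $F$ is non-empty, grow the forest by a search seeded simultaneously at all path-component nodes: each cycle node, when first reached, is $G_1$-adjacent to a vertex already in the forest, so the elementary move applies, and connectivity of $\mathcal A$ forces every cycle to be absorbed. The resulting forest is spanning because $H$ covers $V(G_1)$, every component joins a tree, and every tree contains at least one path component, so its $\frac34$-approximation follows from the base estimate of $F$ together with the per-cycle surplus above.

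The one case this does not reach, and the main obstacle, is when $H$ has no path component, so $F$ is empty and some cycle must seed a tree. A seed cycle opened into a path supplies only $k_1-2$ internal vertices against its $k_1$ edges, which is below $\frac34$ when $k_1<8$, so the naive count fails. The remedy exploits that in this case $H$ has at least two components, all cycles, so the seed's cluster contains at least two cycles: pick the seed $c_1$ together with an adjacent cycle $c_2$ meeting it at an edge $(x,w)$ with $x\in V(c_1)$, and open $c_1$ by deleting a cycle edge incident with $x$, so that $x$ becomes a leaf of the seed path. Attaching $c_2$ at $x$ then turns $x$ internal, yielding one extra internal vertex beyond the per-cycle counts. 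For an all-cycle tree on $c_1,\dots,c_m$ the internal vertices then number at least $(k_1-2)+1+\sum_{i\ge 2}(k_i-1)=\sum_i k_i-m\ge \frac34\sum_i k_i$, where the last inequality is exactly $\sum_i k_i\ge 4m$, which holds because every cycle component has $k_i\ge 4$ edges. The remaining cycles of the cluster are absorbed by the elementary move as before.

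Finally I would check the bookkeeping of the ``joins'' relation: every seeded or attached cycle $c$ keeps $k-1=|E(c)|-1$ of its edges inside its tree and has all of $V(c)$ there, so $c$ joins; every path component continues to join its tree as in $F$; and since all vertices of $G_1$ are covered, the constructed object is a $\frac34$-approximate spanning forest in which every component of $H$ joins a tree. The only genuinely delicate point is the all-cycle seed accounting, where the single leaf-bonus must be secured by the right choice of the deleted seed edge; everything else is the repeated, ratio-preserving elementary move together with the connectivity of $\mathcal A$.
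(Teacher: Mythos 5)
Your proof is correct and follows essentially the same route as the paper's: each cycle is absorbed by adding a $G_1$-edge at a vertex $w$ of the cycle and deleting a cycle edge incident with $w$, contributing $|E(c)|-1\ge\frac{3}{4}|E(c)|$ new internal vertices, and the cycle-to-cycle attachment (your all-cycle seed with the leaf bonus) is exactly the paper's case of merging two cycles into a path with $|E(p)|+|E(q)|-2\ge\frac{3}{4}(|E(p)|+|E(q)|)$ internal vertices. Your component-adjacency search and the $\sum_i k_i-m\ge\frac{3}{4}\sum_i k_i$ accounting for an all-cycle tree are just a more explicit bookkeeping of the paper's ``repeat methods (1) and (2)'' step.
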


\begin{proof}
By Lemma \ref{short-path-contribution}, let $F$ be a  $\frac{3}{4}$-approximate sub-forest of $G_1$ which is joined by all  path components in $H$.
If $p$ is a cycle component in $H$, then since $G_1$ is connected, there exists an edge of $G_1$ which is incident with  a vertex of $p$ and either a vertex of a tree in $F$ or a vertex of a cycle component other than $p$.

(1)If there is an edge of $G_1$ which is incident with a vertex, say $u$ of $p$ and a vertex  in a tree $T$ in $F$, then $T$ and $p$ can be assembled  into a tree, say $T'$,  by adding this edge between them, and deleting an edge of $p$ incident with $u$. Since $p$ has at least $4$ vertices,  $T'$ must be $\frac{3}{4}$-approximate, because it has at least $\frac{3}{4}|E(p)|$ + $\frac{3}{4}|E(H[V(T)])|$ internal vertices, while the connected components joining it by all, have $|E(p)|$ + $|E(H[V(T)])|$ edges. Removing $T$ from and appending $T'$ to $F$ must result in a  $\frac{3}{4}$-approximate sub-forest which is joined by more components than those joining $F$.

(2)If there is an edge in $G_1$ which is incident with a vertex, say $u$ of  $p$ and a vertex, say  $v$ of another cycle component, say $q$ in $H$, then $p$ and $q$ can be assembled into a tree, say $T'$, by adding  $(u,v)$ between them, and deleting an edge of $p$ incident with $u$ and an edge of $q$ incident with $v$. Since both $p$ and $q$ have at least $4$ vertices, $T'$ must be $\frac{3}{4}$-approximate, because it has at least $\frac{3}{4}|E(p)|$ + $\frac{3}{4}|E(q)|$ internal vertices, while $p$ and $q$ together have $|E(p)|$ + $|E(q)|$ edges. That being the case, appending  this tree to $F$  must result in a  $\frac{3}{4}$-approximate sub-forest which is joined by more components than those joining $F$.

If we insist using $F$ to represent that $\frac{3}{4}$-approximate sub-forest of $G_1$ resulted by the method of $(1)$ or $(2)$, then by the methods of (1) and (2) repeatedly, all  cycle components in $H$ can  be made to join the subtrees in $F$, which keeps to be  $\frac{3}{4}$-approximate all the time.

When all connected components in $H$ are made to join one tree in $F$, then $F$ is a spanning forest  of $G_1$, because $V(F)$ = $V(H)$ = $V(G)$ at this time.
\qed
\end{proof}

 Since $G_1$ is connected,  we can use a set of edges of $G_1$ to link all trees in the forest made by Lemma \ref{cycle-contribution} into  a spanning tree of $G_1$, which has no less  internal vertices than all those trees in the  forest have.

Finally, we integrate those computational steps for finding a spanning tree of a reduced graph into an algorithm named as SpanningTree$(G_1)$, where $G_1$ stands for an arbitrary reduced graph. In this algorithm, by reconstructing $H$, Reconstruct$(G_1,H)$ returns a maximum path-cycle cover of $G_1$ which subjects to Lemma \ref{property-of-path-less-than-four}.

\begin{algorithm}
\caption{SpanningTree$(G_1)$.}
\label{MISTG2}
    \begin{algorithmic}[1]
        \REQUIRE~~\\
            $G_1$: a reduced  graph. \\
        \ENSURE~~\\
            A spanning tree of $G_1$.\\
        \STATE Find a maximum path-cycle cover $H$ of $G_{1}$;\\
        \label{alg-maximum-path-ccyle-cover}
        \STATE $H'$ $\leftarrow$ Reconstruct$(G_1,H)$; (Lemma \ref{property-of-path-less-than-four})\\
        \label{Find_MPCC}
       \STATE $F$ $\leftarrow$ \{$p$  $\in$ $H^{'}$, $p$ is a path component, $|E(p)|>{3}$\}; (Lemma \ref{lemma-long-path-added})\\
       \label{assemble-begin}
       \STATE Assemble all  path components in $H'$  into $F$; (Lemma \ref{short-path-contribution})\\
            \label{assemble-path}
       \STATE Assemble all  cycle components in $H'$ into $F$; (Lemma \ref{cycle-contribution})\\
       \label{assemble-end}
       \STATE Link the trees in $F$ into a spanning tree $T$;
        \STATE Return $T$.
        \label{link-process}
    \end{algorithmic}

\end{algorithm}

\begin{lemma}
\label{G2-approximation}
The algorithm  SpanningTree$(G_1)$ must return a  spanning tree of $G_1$ which has $\frac{3}{4}$ times as many internal vertices as those a maximum internal spanning tree of $G_1$ has.
\end{lemma}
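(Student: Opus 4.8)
The plan is to combine the structural result of Lemma~\ref{cycle-contribution} with the upper bound of Theorem~\ref{internal-bound}. Let $H'$ be the reconstructed maximum path-cycle cover returned in step~\ref{Find_MPCC}, which subjects to Lemma~\ref{property-of-path-less-than-four}, and recall that $|I(G_1)|$ denotes the number of internal vertices of a maximum internal spanning tree of $G_1$. The argument proceeds in three steps: bound from below the internal vertices of the forest assembled in steps~\ref{assemble-begin}--\ref{assemble-end}; argue that the linking in step~\ref{link-process} preserves those internal vertices; and bound $|I(G_1)|$ from above by $|E(H')|$.

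First I would apply Lemma~\ref{cycle-contribution}: steps~\ref{assemble-begin}--\ref{assemble-end} produce a $\frac{3}{4}$-approximate spanning forest $F$ of $G_1$ in which every connected component of $H'$ joins exactly one tree. By the definition of $\frac{3}{4}$-approximate, each tree $T$ of $F$ has at least $\frac{3}{4}|E(H'[V(T)])|$ internal vertices, where $E(H'[V(T)])$ collects the edges of the components of $H'$ that join $T$. Summing over the trees of $F$, and using that each component of $H'$ contributes its edges to precisely one tree, the total number of internal vertices of $F$ is at least $\frac{3}{4}\sum_{T}|E(H'[V(T)])| = \frac{3}{4}|E(H')|$. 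The point to check carefully here is that no edge of $H'$ is tallied twice, which is exactly what the clause ``joins one tree'' in Lemma~\ref{cycle-contribution} secures.

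Next I would observe that step~\ref{link-process} links the trees of $F$ into a single spanning tree $T$ by adding edges of $G_1$, and that adding an edge only raises the degrees of its two endpoints; hence every vertex already internal in $F$ stays internal in $T$, so $|I(T)| \ge \frac{3}{4}|E(H')|$. Finally, Theorem~\ref{internal-bound} gives $|I(G_1)| < |E(H')|$, whence $|I(G_1)| \le |E(H')|$, and chaining the inequalities yields $|I(T)| \ge \frac{3}{4}|E(H')| \ge \frac{3}{4}|I(G_1)|$, the required ratio. I do not anticipate a serious obstacle, since the two substantive ingredients---the existence of the $\frac{3}{4}$-approximate spanning forest and the path-cycle-cover upper bound---are already established; the only care needed is the additive bookkeeping of the second step together with the elementary fact that linking trees cannot demote an internal vertex to a leaf.
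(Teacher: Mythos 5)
Your proposal is correct and follows essentially the same route as the paper's proof: invoke Lemma~\ref{cycle-contribution} to obtain a $\frac{3}{4}$-approximate spanning forest, note that linking its trees into a spanning tree cannot demote an internal vertex, and close with the upper bound of Theorem~\ref{internal-bound}. Your explicit summation $\sum_T |E(H'[V(T)])| = |E(H')|$ merely spells out bookkeeping the paper leaves implicit in the definition of a $\frac{3}{4}$-approximate forest.
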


\begin{proof}
Let $T$ be the tree returned  by  SpanningTree$(G_1)$. By Lemma \ref{cycle-contribution},  that spanning forest of $G_1$ from which $T$ is made is $\frac{3}{4}$-approximate. Moreover, linking a spanning forest  into a tree does not add any extra vertex to that tree and loss any internal vertex of that forest. Thus $T$ is $\frac{3}{4}$-approximate. By Theorem \ref{internal-bound}, the proof is done.
  \qed
\end{proof}

Let $G_1$ = $(V_{1}$, $E_{1})$.  It takes $O(|V_1|$ $|E_1|^{1.5}$ $log|V_1|)$ time to find a maximum path-cycle cover of $G_{1}$ \cite{Edmonds1970}; it takes $O(|V_1|+|E_1|)$ time  to reconstruct a  maximum path-cycle cover each of whose cycle components has at least four edges. Thus, Step \ref{alg-maximum-path-ccyle-cover}, \ref{Find_MPCC} of SpanningTree$(G_1)$ takes $O(|V_1|$ $|E_1|^{1.5}$ $log|V_1|)$ time. Each step from \ref{assemble-begin} to \ref{assemble-end} for assembling those connected components into a spanning forest of $G_1$ takes $O(|V_1|+|E_1|)$ time. To sum up, the time complexity of SpanningTree$(G_1)$ is $O(|V_1|$ $|E_1|^{1.5}$ $log|V_1|)$.

\begin{theorem}
For any undirected simple graph, MIST can be approximated to a performance ratio $\frac{4}{3}$ in polynomial time.
\end{theorem}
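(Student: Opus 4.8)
The plan is to obtain the theorem as the composition of the two machines built in the previous sections: the safe reduction procedure Reduce$(G)$ and the construction SpanningTree$(G_1)$ on reduced graphs. Given an arbitrary undirected simple graph $G$ (which must be connected for a spanning tree to exist, and otherwise the argument applies componentwise), I would first run Reduce$(G)$ to produce a reduced subgraph $G_1$ of $G$. By Corollary \ref{delete-leaves}, all deletions performed are safe, so the internal vertices are preserved in the sense that $|I(G)| = |I(G_1)|$, where $I(\cdot)$ denotes the internal-vertex set of a maximum internal spanning tree.

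Next I would feed $G_1$ to the algorithm SpanningTree$(G_1)$. By Lemma \ref{G2-approximation}, this returns a spanning tree $T_1$ of $G_1$ whose number of internal vertices is at least $\frac{3}{4}$ of that of a maximum internal spanning tree of $G_1$, i.e. $|I(T_1)| \ge \frac{3}{4}|I(G_1)|$. To turn this into a solution for $G$, I would invoke Lemma \ref{alg-internal-number} to lift $T_1$ to a spanning tree $T$ of $G$ by re-attaching the leaves in $V(G)\setminus V(G_1)$, which leaves the internal-vertex set unchanged, so $|I(T)| = |I(T_1)|$. Combining these facts gives
\[
|I(T)| = |I(T_1)| \ge \tfrac{3}{4}|I(G_1)| = \tfrac{3}{4}|I(G)|,
\]
equivalently $\frac{|I(G)|}{|I(T)|} \le \frac{4}{3}$, which is exactly the claimed performance ratio.

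Finally I would check polynomiality by summing the running times already established: Reduce$(G)$ costs $O((|V|+|E|)^{2})$, SpanningTree$(G_1)$ costs $O(|V_1|\,|E_1|^{1.5}\,\log|V_1|)$, and re-attaching the deleted leaves is linear, so the whole procedure runs in polynomial time. Because every conceptual difficulty has already been discharged upstream — the path-cycle-cover bound of Theorem \ref{internal-bound}, the safeness of the edge and vertex deletions, and especially the chain of reconstruction lemmas culminating in Lemma \ref{property-of-path-less-than-four} that guarantees each singleton or short path component can be absorbed into a component of length at least four — I expect no genuine obstacle at this stage; the theorem is the capstone that merely records the composition. The single point warranting a line of care is confirming that the ratio truly transfers through the reduction, namely that the equality $|I(G)|=|I(G_1)|$ of Corollary \ref{delete-leaves} combines with $|I(T)|=|I(T_1)|$ of Lemma \ref{alg-internal-number} to yield on $G$ a ratio identical to the one guaranteed on $G_1$, exactly as the discussion preceding this theorem already foreshadows.
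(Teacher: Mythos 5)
Your proposal is correct and follows essentially the same route as the paper: the paper's own proof is just the composition of the reduction step (Section \ref{pre-process}, where $|I(G)|=|I(G_1)|$ and the lifted tree preserves internal vertices by Lemma \ref{alg-internal-number}) with Lemma \ref{G2-approximation}, which is exactly the chain you spell out. Your version merely makes explicit the inequality chain and the running-time tally that the paper leaves to the surrounding discussion.
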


\begin{proof}
Recalling to Section \ref{pre-process}, if MIST can be approximated to $\frac{4}{3}$ on reduced graphs in polynomial time, it  can also be approximated to $\frac{4}{3}$ on all graphs in polynomial time. By  Lemma \ref{G2-approximation}, the proof is done. \qed
\end{proof}

A spanning tree of an arbitrary undirected simple graph can be found by first calling Reduce$(\bullet)$ to get a reduced graph, then calling SpanningTree$(\bullet)$ to get a spanning tree of that reduced graph, and finally readding those leaves deleted by Reduce$(\bullet)$ to the tree. The time complexity for finding such a panning tree of  $G$ $=$ $(V,E)$ is $O(|V|$ $|E|^{1.5}$ $log|V|)$.

\subsection{An example}
\label{tightness-34-algorithm}
In Fig. \ref{MISTnoleaf}, we give an example to verify the performance of the algorithm. The $\frac{4}{3}$-approximation algorithm starts with a maximum path-cycle cover exactly containing $k$ cycles of length 4. The algorithm will output $T$ as its solution, while  $T^{*}$ is a spanning tree as  an optimal solution.  Since $T$ has  $3k$ internal vertices, while $T^{*}$ has  $4k-2$ internal vertices, thus increasing $k$, we come close to a $\frac{4}{3}$ ratio.

\begin{figure}[htbp]
\begin{center}
 \includegraphics[width=0.8\textwidth]{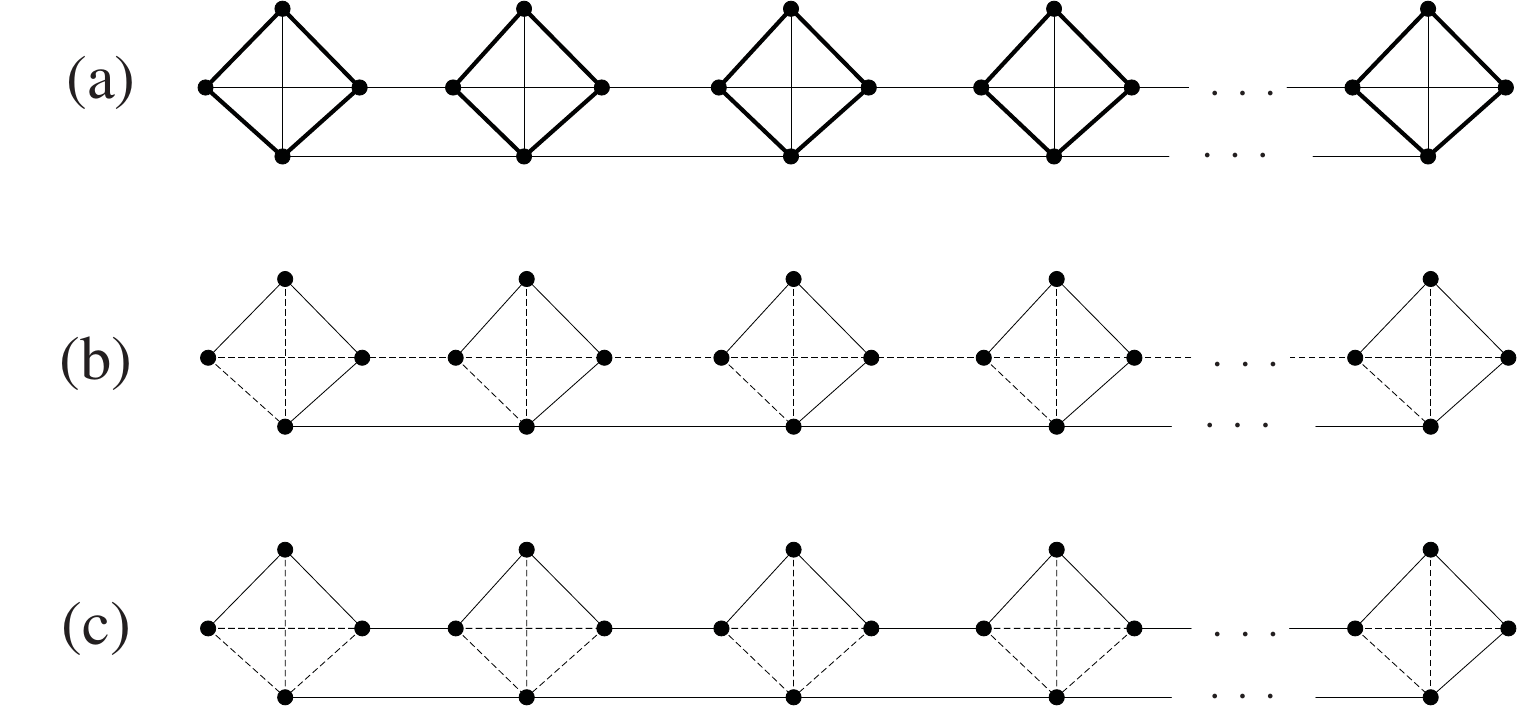}
\caption{\label{MISTnoleaf}$(a)$A graph $G$ whose maximum path-cycle cover are composed of $k$ cycles of length 4 (bold line squares). $(b)$The spanning tree of $3k$ internal vertices as a solution of SpanningTree$(G)$, if it starts with a path-cycle cover of $G'$ as in (a) of this figure. $(c)$A  maximum internal spanning tree of $G$ with $4k-2$ internal vertices. }
\end{center}
  \end{figure}

\section{Hardness to approximate MIST}
\label{hardness}
In this section, we show that if P $\neq$ NP, MIST  cannot be approximated  to within $1+\epsilon$ for some $\epsilon >0$ in polynomial time. To do this, we first present a reduction from  (1,2)-TSP  to MPC, then a   reduction from MPC to MIST. As a typical NP-hard optimization problem,
(1,2)-TSP is given by an undirected complete graph in which each edge has weight 1 or 2, and asks to find a Hamilton cycle of this graph such that the total weights of its edges  is minimized. A Hamilton cycle of an edge weighted graph is \emph{minimum weighted}, if  the total weights of its edges  is minimized over all Hamilton cycles of that graph. MPC has been proven Max-SNP-hard in \cite{Papa1993}.

\begin{theorem}
\label{hardness-of-maximum-path-cover}
If P $\neq$ NP, then for some $\epsilon >0$,  MPC cannot be approximated to within $1+\epsilon$  in polynomial time.
\end{theorem}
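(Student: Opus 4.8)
The plan is to prove the statement by exhibiting an approximation-preserving reduction (an L-reduction) from (1,2)-TSP, which is Max-SNP-hard \cite{Papa1993}, to MPC; this immediately yields a threshold $\epsilon>0$ below which MPC admits no polynomial-time approximation unless P $=$ NP. Given a (1,2)-TSP instance on the complete graph $K_{n}$ with weight function $c:E(K_n)\to\{1,2\}$, the map $f$ would output the spanning subgraph $G_{1}$ of $K_{n}$ consisting of exactly the weight-$1$ edges. The guiding intuition is that an inexpensive Hamilton cycle uses few weight-$2$ edges, and that deleting those weight-$2$ edges shatters the cycle into vertex-disjoint paths living entirely inside $G_{1}$; conversely, a path cover of $G_{1}$ with few components can be closed up into a Hamilton cycle by adding few joining edges, each of weight at most $2$ since they come from $K_{n}$.

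The central step is to establish the value identity $\mathrm{OPT}_{\mathrm{MPC}}(G_{1})=2n-\mathrm{OPT}_{\mathrm{TSP}}$. Writing $w^{*}$ for the least number of weight-$2$ edges over all Hamilton cycles, so that $\mathrm{OPT}_{\mathrm{TSP}}=n+w^{*}$, I would argue both directions. For one direction, deleting the $w^{*}$ weight-$2$ edges from an optimal tour produces a path cover of $G_{1}$ with $n-w^{*}$ edges, using the identity $|P|+|E(P)|=|V|$ already exploited in the proof of Theorem~\ref{internal-bound}; hence $\mathrm{OPT}_{\mathrm{MPC}}\ge n-w^{*}$. For the other direction, I define the back-map $g$ to take any path cover $P$, with $n-|E(P)|$ components, and join those components cyclically by adding exactly $n-|E(P)|$ edges of $K_{n}$; the resulting Hamilton cycle has weight at most $|E(P)|+2\bigl(n-|E(P)|\bigr)=2n-|E(P)|$, so $\mathrm{OPT}_{\mathrm{TSP}}\le 2n-\mathrm{OPT}_{\mathrm{MPC}}$. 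Combining the two inequalities gives the identity (exactly when $w^{*}\ge 1$), and $g$ simultaneously bounds the cost of any constructed tour by $2n-|E(P)|$.

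With $f$ and $g$ in place I would verify the two L-reduction conditions, both with constant $1$. Since every Hamilton cycle has weight at least $n$ while every path cover of $G_{1}$ has at most $n-1$ edges, $\mathrm{OPT}_{\mathrm{MPC}}\le n-1<n\le\mathrm{OPT}_{\mathrm{TSP}}$, which is the first condition with $\alpha=1$. For the second, the identity gives, for any path cover $P$, that $c\bigl(g(P)\bigr)-\mathrm{OPT}_{\mathrm{TSP}}\le\bigl(2n-|E(P)|\bigr)-\bigl(2n-\mathrm{OPT}_{\mathrm{MPC}}\bigr)=\mathrm{OPT}_{\mathrm{MPC}}-|E(P)|$, so the TSP error is dominated by the MPC error with $\beta=1$. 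A routine computation then converts any polynomial $(1+\delta)$-approximation for MPC into a polynomial $(1+\delta)$-approximation for (1,2)-TSP; choosing $\epsilon$ to be the inapproximability threshold guaranteed by the Max-SNP-hardness of (1,2)-TSP contradicts P $\neq$ NP and finishes the proof.

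The step I expect to be the main obstacle is making the reduction genuinely error-preserving rather than merely value-matching, since MPC is a maximization problem while (1,2)-TSP is a minimization problem. Concretely, the delicate point is controlling the weights of the joining edges used by $g$: I must ensure that a near-optimal path cover always closes into a near-optimal tour. The only case where the naive closure loses is the degenerate one in which $G_{1}$ already contains a weight-$1$ Hamilton cycle ($w^{*}=0$), where the identity slips by an additive $1$; this is harmless because the Max-SNP-hardness of (1,2)-TSP is witnessed by instances whose optimum is bounded away from $n$, i.e.\ with $w^{*}=\Theta(n)$, so the additive discrepancy is absorbed and the hardness gap transfers intact.
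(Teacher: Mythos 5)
Your proposal is essentially the paper's own proof: the same reduction (delete the weight-$2$ edges of the (1,2)-TSP instance), the same value relation $w(C^{*})+|E(P^{*})|=2n$ (or $2n-1$ when the weight-$1$ subgraph is Hamiltonian), and the same back-map that closes a path cover $P$ into a tour of weight at most $2n-|E(P)|$; your L-reduction packaging with $\alpha=\beta=1$ versus the paper's explicit bound $w(C)-w(C^{*})\leq 2(|E(P^{*})|-|E(P)|)$ is only a cosmetic difference, since the constant is irrelevant to Max-SNP-hardness. The one shaky spot is how you dismiss the degenerate case $w^{*}=0$: it is not true that the hardness of (1,2)-TSP is witnessed only by instances with $w^{*}=\Theta(n)$ --- in the standard gap construction the yes-instances have a tour of weight exactly $n$, i.e.\ $w^{*}=0$ --- so that justification does not stand. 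The additive slip of $1$ is nonetheless harmless for the reason the paper actually uses: every tour has weight at least $n$, so a $+1$ additive error is a relative error of at most $1/n$, which is below any fixed $\epsilon$ once $n\geq 1/\epsilon$, and the finitely many smaller instances can be handled by brute force.
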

\begin{proof}
The proof is a reduction from (1,2)-TSP.
Let $G$ be a graph as an instance of (1,2)-TSP. We set a graph, say $G'$,  as an instance of the Maximum Path Cover problem by deleting all those edges of weight 2 from $G$.
Let $C^*$ be a minimum weighted Hamilton cycle of $G$,  $P^*$  a maximum path cover of $G'$. For a Hamilton cycle of $G$, say $C$, we denote by $w(C)$ the total weights of the edges of $C$. Let $|V(G)|$ = $|V(G')|$ = $n$.

\begin{property}
\label{L-reduction-OPT-111}
If $G'$ has a Hamilton cycle, then $w(C^*)$ + $|E(P^*)|$ = $2n-1$. Otherwise, $w(C^*)$ + $|E(P^*)|$ = $2n$.
\end{property}
\begin{proof}
If $G'$ has a Hamilton cycle, then $w(C^*)$ = $n$, $|E(P^*)|$ = $n-1$. Thus $w(C^*)$ + $|E(P^*)|$ = $2n-1$. Otherwise, $C^*$ must have $n$- $|E(P^*)|$ edges of weight 2. Thus, $w(C^*)$ + $|E(P^*)|$ = $2n$. \qed
\end{proof}

\begin{property}
 \label{L-reduction-solution}
If $G'$ has a path cover, say  $P$,  with at least 2 path components, then $G$ has a Hamilton cycle, say $C$,  with
    $w(C)-w(C^{*})$ $\leq$ $2(|E(P^{*})|-|E(P)|)$.
\end{property}

\begin{proof}
Let $p_1$, ..., $p_k$ be the path components in $P$.
Let $C$ be the Hamilton cycle by adding the edge of $G$ between an endpoint of $p_i$ and an endpoint of $p_{i+1}$ for  $1$ $\leq$ $i$ $\leq$ $k$, where $p_{k+1}$ = $p_1$.
Then $w(C)$ $\leq$ $|E(P)|$ $+$ $2(n-|E(P)|)$ $=$ $2n-|E(P)|$.  If $G'$ has no Hamilton cycle, then  by Property \ref{L-reduction-OPT-111}, $w(C)$ $\leq$ $w(C^{*})$ $+$ $|E(P^{*})|$ $-$ $|E(P)|$, and $w(C)-w(C^{*})$ $\leq$ $|E(P^{*})|$ - $|E(P)|$ consequently. If $G'$ has a Hamilton cycle, then  by Property \ref{L-reduction-OPT-111},  $w(C)$ $\leq$ $w(C^{*})$ $+$ $|E(P^{*})|$ $+$ $1-|E(P)|$. Since $|P|$ $\geq$ $2$, $|E(P^{*})|$ $-$ $|E(P)|$ $\geq$ $1$. Thus, $w(C)$ $-$ $w(C^{*})$ $\leq$ $2(|E(P^{*})|-|E(P)|)$. \qed
\end{proof}

If for some $\epsilon$ $>$ $0$, MPC can be approximated to $1+\epsilon$ in polynomial time, we argue that $(1,2)$-TSP
 can  be approximated to within $(1+2\epsilon)$  in polynomial time, which contradicts to the fact that (1,2)-TSP is Max-SNP-Hard \cite{Papa1993}. Let $P$ be a path cover as a solution of that $(1+\epsilon)$-approximation algorithm  for $G'$. Then,
 $|E(P^{*})|-|E(P)|$ $\leq$ $\epsilon |E(P)|$. Let $C$ be a Hamilton cycle constructed from $P$ by adding edges between the endpoints of those path components in $P$.

If $P$ has only one path component, then  $w(C)$ $\leq$ $w(C^*)$ + 1 $\leq$ $(1+\frac{1}{n})$ $w(C^*)$. If $n$ $\geq$ $\frac{1}{\epsilon}$, then $w(C)$ $\leq$ $(1+\epsilon)$ $w(C^*)$. If $n$ $<$ $\frac{1}{\epsilon}$, we can enumerate at most $O(n^{\frac{1}{\epsilon}})$ Hamilton cycles of $G$ to find a minimum weighted one.

If $P$ has at least 2 path components, then $C$ is a $1+2\epsilon$ solution of $G$ as a $(1,2)$-TSP instance, because  by Property  \ref{L-reduction-solution}, $w(C)$ $\leq$ $w(C^*)$ $+$ $2(E(P^{*})-|E(p)|)$ $\leq$ $w(C^*)$ + $2\epsilon |E(P)|$ $\leq$ $w(C^*)$ + $2\epsilon|E(P^{*})|$ $\leq$ $w(C^*)$ + $2\epsilon{w(C^{*})}$ $\leq$ $ (1+2\epsilon)w(C^{*})$.
\qed
\end{proof}


\begin{theorem}
If $P\not =NP$, then for some $\epsilon>0$, MIST cannot be  approximated to within $1+\epsilon$  in polynomial time.
\end{theorem}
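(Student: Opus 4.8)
The plan is to reduce MPC to MIST by an L-reduction, so that the Max-SNP-hardness of MPC established in Theorem \ref{hardness-of-maximum-path-cover} transfers to MIST and yields the desired $\epsilon>0$. Given an MPC instance $G=(V,E)$ with $|V|=n$ (and at least one edge, the only nontrivial case), I would build the MIST instance $G'=(V\cup\{c\},\,E\cup\{(c,v):v\in V\})$ by adjoining a single \emph{universal} vertex $c$ adjacent to every vertex of $G$; clearly $G'$ is undirected, simple and connected, hence a legitimate MIST instance. The intuition driving the construction is that a path cover of $G$ with components $p_1,\dots,p_k$ extends to a spanning tree of $G'$ by joining $c$ to one endpoint of each $p_i$: every inner vertex of each $p_i$ stays internal, each endpoint that $c$ meets becomes internal, and $c$ itself becomes internal once $k\ge 2$. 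The central relationship I aim to prove is that $\mathrm{OPT}_{\mathrm{MIST}}(G')=|E(P^*)|+1$ whenever a maximum path cover $P^*$ of $G$ has at least two components (and $\mathrm{OPT}_{\mathrm{MIST}}(G')=|E(P^*)|$ in the degenerate case where $G$ has a Hamilton path), so that in all cases $\mathrm{OPT}_{\mathrm{MIST}}(G')\le |E(P^*)|+1\le 2\,|E(P^*)|$.

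For the forward direction, starting from an optimal path cover $P^*$ and performing the spider-like joining above shows $\mathrm{OPT}_{\mathrm{MIST}}(G')\ge |E(P^*)|+1$ when $P^*$ has $k\ge 2$ components, since the $k-1$ inner vertices counted per nonsingleton path sum to exactly $|E(P^*)|$ internal vertices among $V$, plus $c$. The reverse direction is where the real content lies and is also the step I expect to be the main obstacle: given an arbitrary spanning tree $T$ of $G'$ with $I$ internal vertices, I must constructively extract a path cover of $G$ with at least $I-1$ edges. Deleting $c$ leaves a forest $F$ on $V$ whose components $F_1,\dots,F_d$ are \emph{arbitrary} subtrees of $G$, not paths, so I would invoke Lemma \ref{tree-path-number} to decompose each $F_j$ having $\ge 2$ vertices into a path cover with at most $\ell_j-1$ paths, i.e.\ with at least $\mathrm{internal}(F_j)+1$ edges. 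Taking the union over $j$ gives a path cover of $G$; the delicate bookkeeping is that $c$ contributes at most one edge per component, so each $F_j$ can gain at most one extra internal vertex in $T$ beyond $\mathrm{internal}(F_j)$, giving $I\le 1+\sum_{|V(F_j)|\ge 2}(\mathrm{internal}(F_j)+1)$, which is bounded by $1$ plus the number of edges of the extracted cover. Hence the extracted cover has at least $I-1$ edges, and combined with the forward bound this also forces the claimed equality for $\mathrm{OPT}_{\mathrm{MIST}}(G')$.

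These two facts certify an L-reduction: the first condition holds with $\mathrm{OPT}_{\mathrm{MIST}}(G')\le 2\,\mathrm{OPT}_{\mathrm{MPC}}(G)$, and the second holds with constant $1$, since for any spanning tree $T$ of value $I$ the extracted path cover $x$ satisfies $\mathrm{OPT}_{\mathrm{MPC}}(G)-|E(x)|\le (\mathrm{OPT}_{\mathrm{MPC}}(G)+1)-I=\mathrm{OPT}_{\mathrm{MIST}}(G')-I$. Because L-reductions preserve Max-SNP-hardness, the hardness of MPC from Theorem \ref{hardness-of-maximum-path-cover} propagates to MIST, so there is an $\epsilon>0$ for which no polynomial-time algorithm approximates MIST to within $1+\epsilon$ unless $\mathrm{P}=\mathrm{NP}$. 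The subtleties I would be most careful about are the handling of singleton components and the degenerate Hamiltonian-path case (where $c$ is forced to be a leaf and the ``$+1$'' on $c$ disappears), and verifying that the path-decomposition edge count coming out of Lemma \ref{tree-path-number} matches the internal-vertex gain in $T$ exactly enough to keep the additive loss at a single edge.
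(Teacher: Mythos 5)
Your proposal is correct and follows essentially the same route as the paper: the identical universal-vertex construction of $G'$, the same forward direction joining the new vertex to one endpoint of each path component, and the same use of Lemma \ref{tree-path-number} to recover a path cover with at least $I-1$ edges from a spanning tree with $I$ internal vertices. The only cosmetic differences are that you apply the path-decomposition lemma componentwise after deleting $c$ (the paper applies it to the whole tree and then deletes $v$), and that you package the conclusion as an explicit L-reduction with constants $(2,1)$, whereas the paper unrolls the same bound into a direct $(1+\epsilon)\mapsto(1+2\epsilon)$ computation, absorbing the additive $+1$ (the same slack you flag in the degenerate Hamiltonian case) by brute-forcing small instances.
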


\begin{proof}
 The reduction is from MPC. Let $G$ be a graph as an instance of the Maximum Path Cover problem. We construct a graph $G'$ as an instance of MIST, by introducing a new vertex and connecting  it with each vertex of $G$ by an edge. Concretely, let $G$ $=$ $(V,E)$, then  $V(G')$ $=$ $V\cup\{v\}$,  $E(G')$ $=$ $E$ $\cup$ $\{(v,u):$ $u\in V\}$, where $v$ $\notin$ $V$.
Let $P^*$ be a maximum path cover of $G$. Then, $|P^*|$ + $|E(P^*)|$ = $|V(G)|$.
Let $T^{*}$ be a maximum internal spanning tree of $G'$, $I(T^*)$ the set of internal vertices of $T^*$, $L(T^{*})$ the set  of leaves of $T^{*}$. Then $|L(T^*)|$ + $|I(T^*)|$ $=$ $|V(G)|$ +1.

\begin{property}
\label{path-num-equal-leaf-num}
If $G$ has no Hamilton path, then a maximum path cover of $G$ has as many path components as the leaves a maximum internal spanning tree of $G'$  has.
\end{property}

\begin{proof}

(1) From $P^*$, we can construct a spanning tree $T$ of $G'$ by connecting $v$ with exactly one endpoint of each path component in $P^*$. Then $T$ has just $|P^{*}|$ leaves. So $|L(T^{*})|$ $\leq$ $|P^{*}|$.

(2) By Lemma \ref{tree-path-number}, there is a path cover, say $P$ of $T^*$, with less path components than the leaves of $T^*$.  That is, $|P|$ $\leq$ $|L(T^{*})|$ $-$ $1$. We can get a path cover $P'$ of $G$ by deleting  $v$ from $P$, where $v$ $\in$ $V(G')$ $\setminus$ $V(G)$. Then $P'$ has at most  $|P|$ $+$ $1$ $\leq$ $|L(T^{*})|$ path components. Namely, $|P^{*}|$ $\leq$ $|L(T^{*})|$.

Finally, $|P^{*}|$ $=$ $|L(T^{*})|$ follows from  $(1)$ and $(2)$.
\qed
\end{proof}

\begin{property}
\label{intrenal-nodes-edges-relation}
A maximum internal spanning tree of $G'$  has  no less internal vertices than the edges those path components in $P^{*}$ has.
\end{property}

\begin{proof}
If $G$ has a Hamilton path,  then $|E(P^*)|$ $=$ $|V(G)|$ $-$ $1$, and $|I(T^{*})|$ $=$ $|V(G)|$ $-$ $1$. So $|E(P^*)|$ $=$ $|I(T^{*})|$.
If  $G$ has no Hamilton path, then by Property \ref{path-num-equal-leaf-num}, $|P^*|$ $=$ $|V(G)|$ $+$ $1$ $-$ $|I(T^{*})|$. Thus,
$|I(T^{*})|$ $=$ $|V(G)|$ $+$ $1$ $-$ $|P^*|$ = $|E(P^*)|$ $+$ $1$. That is $|E(P^*)|$ $\leq$ $|I(T^*)|$.
 \qed
\end{proof}

Suppose for some $\epsilon$ $>$ $0$, an algorithm can approximate MIST to $1$ $+$ $\epsilon$ on undirected simple graphs. Let $T$ be a spanning tree of $G'$ as a solution of this algorithm. Then  $|I(T^{*})|$ $\leq$ $(1$ $+$ $\epsilon)$ $|I(T)|$. We can  construct a path cover of $G$, say $P$, first by  the  method in the proof of Lemma \ref{tree-path-number} to get a path cover of $G'$, then deleting  $v$ $\in$ $V(G')$ $\setminus$ $V(G)$ from it. By Lemma \ref{tree-path-number}, this path cover of $G$ must have no more than $|L(T)|$ path components.  That is, $P$ has at least $|V(G)|$ $-$ $|L(T)|$ $=$ $|I(T)|-1$ edges, which means  $|I(T)|$ $\leq$ $|E(P)|$ $+$ $1$. By Property \ref{intrenal-nodes-edges-relation}, $|E(P^{*})|$ $\leq$ $|I(T^{*})|$ $\leq$ $(1+\epsilon)$ $|I(T)|$ $\leq$ $(1+\epsilon)$ $(|E(P)|+1)$.  If $|E(P)|$ $\geq$ $\frac{1+\epsilon}{\epsilon}$, then $|E(P^{*})|$ $\leq$ $(1+2\epsilon)$ $|E(P)|$,  otherwise, one can use $O(|E(G)|^{\frac{1+\epsilon}{\epsilon}})$ time to find a maximum path cover of $G$. This comes to a contradiction  to Theorem \ref{hardness-of-maximum-path-cover}. \qed
\end{proof}

\section{Conclusion}
\label{conclusion}
We have presented an algorithm for  MIST which can achieve a performance ratio $\frac{4}{3}$ on undirected simple graphs. We believe that the bound of the number of internal vertices for a spanning tree can be applied to designing useful efficient approximation algorithms for other problems such as the Minimum Leaves Spanning Tree problem.  It is interesting  whether MIST can be approximated to a better  performance ratio than $\frac{4}{3}$ on undirected simple graphs. If one want to  follow the method of this paper to arrive at a better performance ratio than $\frac{4}{3}$, it seems necessary to deal with those cycle components in a maximum path-cycle cover.  It is also interesting whether a constant can be decided to which MIST  rejects to be approximated by a polynomial time algorithm, if P $\neq$ NP.

\end{document}